\pdfoutput=1
\documentclass[journal,10pt]{IEEEtran}

\usepackage[T1]{fontenc}

\usepackage{graphicx}
\usepackage{epstopdf}
\usepackage[export]{adjustbox}
\usepackage[dvipsnames]{xcolor}
\usepackage{bbm}
\usepackage{multirow}

\graphicspath{{./figures/}}
\usepackage{subfigure}
\ifCLASSINFOpdf
\else
\fi
 
\usepackage{amssymb}
\usepackage{amsmath} 
\usepackage[cmintegrals]{newtxmath}
\usepackage{stackengine}
\def\delequal{\mathrel{\ensurestackMath{\stackon[1pt]{=}{\scriptstyle\Delta}}}}

\usepackage{algorithmic}
\usepackage{algorithm}

\usepackage{amsthm}
\usepackage{xpatch}

\newtheorem{lem}{Lemma}

\newtheorem{rem}{Remark}

\xpatchcmd{\proof}{\hskip\labelsep}{\hskip5\labelsep}{}{}

\usepackage{mathtools}
\DeclarePairedDelimiter{\ceil}{\lceil}{\rceil}

\usepackage[normalem]{ulem} 

\usepackage{cite}

\usepackage{array}

\makeatletter
\let\sout@orig\sout
\renewcommand{\sout}[1]{\ifmmode\text{\sout@orig{\ensuremath{#1}}}\else\sout@orig{#1}\fi}
\makeatother

\begin{document}
\title{Reinforcement Learning-Based Secure Beamforming Against Satellite Eavesdroppers}

\author{Juhwan Seo, Hyesang Cho, \textit{Member}, \textit{IEEE}, and Dong-Hyun Jung, \textit{Member}, \textit{IEEE}\\
      \thanks{J. Seo is with Samsung Electronics, Suwon 16677, South Korea (e-mail:juhwanseo92@gmail.com).}
     \thanks{H. Cho is with the Department of Electrical and Electronic Engineering, Inha University, Incheon 22212, South Korea (e-mail: hyesang@inha.ac.kr)}
     \thanks{D.-H. Jung is with the School of Electronic Engineering, Soongsil University, Seoul 06978, South Korea (e-mail: dhjung@ssu.ac.kr).}
}
\maketitle

\begin{abstract}
    This paper investigates physical-layer security for uplink low Earth orbit (LEO) satellite communications in the presence of multiple non-colluding satellite eavesdroppers. Secure beamforming design in such systems is challenging due to time-varying orbital geometry and probabilistic fading-induced outage constraints. To address this, we first derive tractable closed-form expressions for both connection and secrecy outage probabilities under Nakagami-$m$ fading, and develop differentiable upper-bound cost functions that are amenable to optimization. Next, to exploit the predictable orbital dynamics and temporal correlation of satellite mobility, we reformulate the non-convex secrecy rate maximization problem as a constrained Markov decision process. We then develop a primal-dual soft actor-critic algorithm with a multi-head cost critic that jointly optimizes beamforming while enforcing average outage constraints via Lagrangian relaxation. Numerical results show that the proposed framework improves the ergodic secrecy rate over maximum ratio transmission across all eavesdropper configurations, and outperforms zero-forcing in dense eavesdropping regimes. It achieves within 7\% of an offline successive convex approximation benchmark while requiring only a single forward pass, enabling low-complexity real-time operation. These results indicate that the proposed approach is applicable to secure beamforming in dynamic LEO satellite environments.
\end{abstract}

\begin{IEEEkeywords}
    Low Earth orbit, satellite communications, physical-layer security, satellite eavesdropper, reinforcement learning, beamforming, secrecy rate.
\end{IEEEkeywords}

\IEEEpeerreviewmaketitle

\def\E{\mathbb{E}}
\def\P{\mathbb{P}}

\def\re{r_{\mathrm{E}}}

\def\th{\mathrm{th}}
\def\out{\mathrm{out}}
\def\ml{\mathrm{ml}}
\def\sl{\mathrm{sl}}

\def\ser{\mathrm{s}}
\def\eve{\mathrm{e}}
\def\ter{\mathrm{t}}
\def\obj{\mathrm{obj}}
\def\qos{\mathrm{QoS}}
\def\T{\mathrm{T}}
\def\H{\mathrm{H}}

\def\Os{\mathcal{O}_{\ser}}
\def\Oe{\mathcal{O}_{\eve}}
\def\Ok{\mathcal{O}_k}
\def\as{a_{\ser}}
\def\ae{a_{\eve}}
\def\ak{a_{k}}
\def\aols{u_{\ser}}
\def\aolsn{{u_{\ser,n}}}
\def\aole{u_{\eve}}
\def\aolen{{u_{\eve,n}}}
\def\aolk{u_{k}}
\def\aolkn{{u_{k,n}}}
\def\incs{i_{\ser}}
\def\ince{i_{\eve}}
\def\inck{i_{k}}
\def\raans{\Omega_{\ser}}
\def\raane{\Omega_{\eve}}
\def\raank{\Omega_{k}}
\def\Deltas{\Delta_{\ser}}
\def\Deltae{\Delta_{\eve}}
\def\Deltak{\Delta_{k}}
\def\Gamkn{\Gamma_{k,n}}
\def\Gamsn{\Gamma_{\ser,n}}
\def\Gamen{\Gamma_{\eve_j,n}}
\def\gamkn{\Gamma_{k,n}}
\def\gamsn{\Gamma_{\ser,n}}
\def\gamen{\Gamma_{\eve_j,n}}

\def\Mt{M_{\mathrm{t}}}

\def\Ns{N}
\def\Ne{N_{\eve}}

\def\wn{\mathbf{w}_{n}}
\def\Pn{P_n}
\def\hkn{\mathbf{h}_{k,n}}
\def\hsn{\mathbf{h}_{\ser,n}}
\def\hen{\mathbf{h}_{\eve,n}}
\def\akn{\mathbf{a}_{k,n}}
\def\asn{\mathbf{a}_{\ser,n}}
\def\aen{\mathbf{a}_{\eve,n}}
\def\Gkn{G_{k,n}}
\def\Gsn{G_{\ser,n}}
\def\Gen{G_{\eve,n}}
\def\Rsn{R_{\ser,n}}
\def\Ren{R_{\eve,n}}
\def\Rkn{R_{k,n}}
\def\Pcon{P_n^{\mathrm{co}}}
\def\Pson{P_{j,n}^{\mathrm{so}}}

\def\fc{f_{\mathrm{c}}}
\def\Gkmax{G_{k}^{\mathrm{max}}}


\section{Introduction}\label{sec:intro}
\IEEEPARstart{T}{he}
3rd Generation Partnership Project (3GPP) has been investigating the integration between terrestrial networks (TNs) and non-terrestrial networks (NTNs) since Release 15~\cite{TR38.811,TR38.821}. By incorporating the wide coverage of NTN elements, such as geostationary orbit (GEO) and low Earth orbit (LEO) satellites, communication services can be extended far beyond the limitations of terrestrial infrastructure, which could enhance global connectivity. The NTNs can also provide coverage to aerial users such as drones, planes, and urban air mobility vehicles. In the forthcoming 6G standard, 3GPP is expected to make a unified standard for TNs and NTNs. As these integrated networks extend connectivity to diverse users and environments, ensuring the confidentiality of transmissions over satellite links against unauthorized interception becomes an increasingly critical design consideration.

Physical layer security (PLS) exploits the inherent randomness of wireless channels to provide information-theoretic secrecy guarantees, as originally established in \cite{wyner75}. While PLS has been extensively studied in terrestrial networks \cite{mukherjee14}, its application to NTNs has attracted growing attention in recent years~\cite{JSAC_Zhu,TIFS_Lei,TWC_Zheng,lin19_robust_sat,guo20_pls_sat,lin18_cog_sat,WCL_Li}. For instance, the ergodic secrecy capacity in unmanned aerial vehicle (UAV) networks was studied in~\cite{JSAC_Zhu}, where a jamming strategy was proposed to confuse eavesdroppers randomly located on the ground. In
\cite{TIFS_Lei}
and \cite{TWC_Zheng}, zero-forcing (ZF)-based beamforming schemes were developed for multi-beam satellite systems, aiming to minimize the satellite's transmit power while maintaining a secrecy rate constraint. More recently, robust secure beamforming under imperfect channel state information (CSI) was investigated in~\cite{lin19_robust_sat} for multibeam satellite systems, and a threshold-based scheduling scheme for multiuser satellite PLS was proposed in~\cite{guo20_pls_sat}. Joint beamforming designs for cognitive satellite-terrestrial networks were also studied in
\cite{lin18_cog_sat}
and \cite{WCL_Li}. However, these studies focus solely on ground-based eavesdroppers, which intercept the signals from aerial nodes.

Beyond conventional ground-based eavesdroppers, aerial eavesdroppers have recently attracted growing attention, where the secrecy performance against UAV-based eavesdroppers was investigated in~\cite{yuan19_uav_eve, tang19_uav_eve, bao20_uav_eve}. Satellite-based eavesdroppers have also emerged as an increasingly relevant threat due to the rapid proliferation of LEO mega-constellations, which increases the likelihood of unauthorized interception from space by reconnaissance satellites or compromised LEO satellites operating in adjacent orbital planes. In~\cite{my22TIFS}, the secrecy performance of ground-to-satellite uplink transmissions was analyzed when satellites serve as eavesdroppers using a stochastic geometry framework, showing that orbital geometry significantly affects the secrecy capacity.
Unlike UAV eavesdroppers, which are characterized by limited operational range and quasi-static hovering positions, satellite eavesdroppers follow deterministic orbital trajectories determined by Keplerian mechanics with rapidly time-varying channel geometry due to high orbital velocities. These two distinctive properties, namely predictable orbits and rapidly changing channels, pose new security design challenges that are fundamentally different from both ground-based and UAV eavesdropper scenarios.

Reinforcement learning (RL) has been widely adopted for optimizing various aspects of wireless communication systems, such as joint power control and beamforming in terrestrial 5G networks~\cite{23_DRL_5G}, dynamic resource allocation for multibeam satellites~\cite{23_sat_rl_beam}, and energy-efficient beamforming for integrated satellite-aerial-terrestrial networks~\cite{24_sat_rl_beam}. RL can learn adaptive policies without explicit channel models or per-slot iterative optimization.
Designing secure beamforming policies against satellite eavesdroppers involves non-convex optimization problems that must be re-solved at every time slot as the orbital geometry evolves. Conventional iterative algorithms, such as semidefinite relaxation and successive convex approximation (SCA), incur high per-slot computational complexity and require careful per-slot initialization, making them impractical for real-time decision-making in dynamic LEO satellite environments. An RL-based framework is therefore needed to learn beamforming policies that exploit orbital predictability under probabilistic outage constraints with multiple satellite eavesdroppers.

Motivated by this, in this paper, we address this gap by adopting an RL-based approach, where the secure beamforming problem is formulated as a constrained Markov decision process (CMDP) \cite{book99_cmdp, 06_RL_constrained}. We propose a primal-dual soft actor-critic (PD-SAC) algorithm that learns secure beamforming policies under connection and secrecy outage probability constraints in the presence of multiple satellite eavesdroppers. This approach relies on closed-form outage expressions under Nakagami-$m$ fading. Combined with a closed-form inequality for the incomplete gamma function, it yields conservative closed-form cost functions that upper bound the outage probabilities using only elementary operations. These cost functions enable gradient-based RL to directly handle probabilistic outage constraints.
The proposed PD-SAC algorithm employs Lagrangian relaxation techniques to enforce the outage requirements within the CMDP formulation.
The main contributions of this paper are summarized as follows:
\begin{itemize}
    \item To the best of our knowledge, this is the first study to investigate secure uplink beamforming for LEO satellite communication systems in the presence of multiple satellite eavesdroppers. While prior works have mainly considered ground-based eavesdroppers~\cite{JSAC_Zhu,TIFS_Lei,TWC_Zheng,lin19_robust_sat,guo20_pls_sat,lin18_cog_sat,WCL_Li}, they do not capture the distinctive characteristics of satellite eavesdroppers, such as orbital trajectories and time-varying channel geometry. Although satellite eavesdroppers were considered in~\cite{my22TIFS}, secure beamforming design was not addressed. In contrast, this work explicitly incorporates multiple satellite eavesdroppers into the uplink beamforming problem and accounts for their geometry-driven, time-varying channels.
    \item We derive closed-form expressions for the connection and secrecy outage probabilities under Nakagami-$m$ fading. We develop tractable cost functions based on a closed-form gamma function inequality that serve as conservative upper bounds on the outage probabilities. These cost functions enable the use of gradient-based policy optimization for the probabilistic outage constraints.
    \item We reformulate the non-convex secrecy rate maximization problem as a CMDP with average outage constraints. We propose a PD-SAC algorithm with a multi-head cost critic that jointly optimizes the beamforming policy through Lagrangian relaxation. We also analyze the computational complexity of the proposed algorithm against conventional beamformers and the offline SCA benchmark, confirming its suitability for real-time deployment.
    \item We provide simulation results demonstrating that the proposed RL policy outperforms maximum ratio transmission (MRT) and surpasses zero-forcing when the number of eavesdroppers is large, approaching an offline per-slot SCA benchmark while requiring only a single forward pass at inference. The off-policy PD-SAC further outperforms an on-policy primal-dual proximal policy optimization (PD-PPO) counterpart under tight outage constraints, highlighting the role of sample efficiency in constrained RL.
\end{itemize}



\textit{Notations:} The superscript $\T$ indicates the transpose operation.
The absolute value of a complex number $x$ is $|x|$, and the $\ell_2$-norm of a vector $\mathbf{x}$ is $\|\mathbf{x}\|$.
$\mathbf{0}$ and $\mathbf{I}$ denote the all-zero vector and the identity matrix of appropriate dimensions, respectively.
The first-kind Bessel function of order $j$ is $J_j(\cdot)$.
The Gamma function is $\Gamma(\cdot)$, and the Pochhammer symbol is defined as $(x)_n=\Gamma(x+n)/\Gamma(x)$.
The lower incomplete gamma function is defined as $\gamma(a, x)=\int_0^x t^{a-1}\exp(-t)\,\mathrm{d}t$.
The ramp function is $[x]^+=\max(0,x)$.
The Hermitian (conjugate) transpose is denoted by $(\cdot)^{\mathrm{H}}$.
The inner product of two vectors $\mathbf{x}$ and $\mathbf{y}$ is $\mathbf{x} \cdot \mathbf{y}$.
The ceiling function of a real value $x$ is $\ceil{x}$.
The Kronecker product is~$\otimes$.
The Hadamard product is denoted by~$\odot$.
 The basic 3D rotation matrices about the $x$-, $y$-, and $z$-axes are denoted by $R_x(\cdot)$, $R_y(\cdot)$, and $R_z(\cdot)$, respectively.
$\operatorname{Re}(\cdot)$ and $\operatorname{Im}(\cdot)$ denote the real and imaginary parts of a complex vector, respectively.

\begin{figure}
    \centering
    \includegraphics[width=0.8\columnwidth]{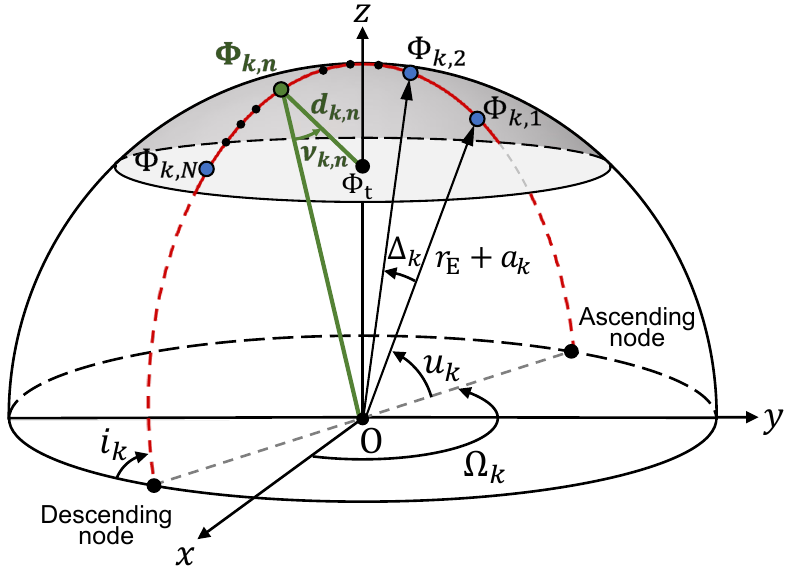}
    \caption{Satellite trajectory defined in an Earth-centered global coordinate system $(x,y,z)$ by a set of orbital elements $\Ok = (\ak,\Omega_k,i_k,u_k)$, where the ground terminal is located at the North Pole ($\phi=90^{\circ}$), i.e., $\Phi_{\ter,n} = [0\;\;0\;\;\re]^{\T}$. Here, $\mathrm{O}$ denotes the Earth’s center, and $\Phi_{k,n}$ represents the position of satellite $k$ at time slot~$n$. The red solid and dashed curves indicate the visible and invisible portions of the satellite orbit with respect to the ground terminal, respectively.
    }
    \label{Fig:system_model}
\end{figure}

\section{System Model}\label{sec:syst_model}
We consider an uplink satellite communication system in which a ground terminal $\ter$ communicates with a serving satellite $\ser$ in the presence of $E$ satellite eavesdroppers $\eve_j$, $j \in \{1,2,\ldots,E\}$. The eavesdroppers are assumed to be non-colluding, i.e., each eavesdropper operates independently without cooperation with others~\cite{my17VTC}. The orbital parameters of potential eavesdropper satellites are assumed to be known, since satellites are physical objects whose orbits are determined by deterministic Keplerian mechanics and can be readily observed and predicted. As illustrated in Fig.~\ref{Fig:system_model}, the satellite orbits are defined in a global Earth-centered coordinate system $(x,y,z)$, where the origin $\mathrm{O}$ is located at the Earth’s center and $\re$ denotes the Earth’s radius. Since the satellites are assumed to follow circular orbits, the orbit of satellite $k$ is characterized by four orbital elements $\Ok = (a_k, \Omega_k, i_k, u_k)$, $k \in \{\ser, \eve_j\}$, representing the orbital altitude (or equivalently the semi-major axis), right ascension of the ascending node (RAAN), inclination, and argument of latitude, respectively. The ground terminal is assumed to be fixed on the Earth’s surface, and its location is specified by the latitude $\phi$ and longitude $\psi$. As shown in Fig.~\ref{Fig:angles}, the terminal is equipped with a uniform planar array (UPA), which is defined in a local coordinate system $(\bar{x},\bar{y},\bar{z})$ centered at the terminal. The UPA consists of $M \triangleq M_{\bar{x}} \times M_{\bar{y}}$ antenna elements, where $M_{\bar{x}}$ and $M_{\bar{y}}$ denote the numbers of antennas along the $\bar{x}$- and $\bar{y}$-axes, respectively.

The terminal transmits information signals only when the serving satellite is within the 3-dB beamwidth of the satellite's receive antenna. This restriction ensures that transmissions occur only when the link quality is sufficiently high, while avoiding unnecessary information leakage when the serving satellite is outside the main beam coverage. To characterize the time-varying satellite geometry, the visible period is discretized into $N_{\mathrm{vis}}$ time slots with a sufficiently small slot interval~$\delta$. The angular evolution of satellite $k \in \{\ser,\eve_j\}$ between adjacent time slots is characterized by an angular offset~$\Deltak$ from the initial argument of latitude $\aolk$, as shown in Fig.~\ref{Fig:system_model}. Accordingly, the argument of latitude of satellite $k$ at time slot $n \in \{1,2,\ldots,N_{\mathrm{vis}}\}$ is given by $\aolkn = \aolk + (n-1)\Deltak$. Since the angular velocity of satellite $k$ is $\omega_k = \sqrt{\frac{G M_{\mathrm{E}}}{(\re + \ak)^3}}$, where $G$ denotes the gravitational constant and $M_{\mathrm{E}}$ is the mass of the Earth~\cite{book13OrbitalVelocity}, the angular offset between adjacent time slots is given by $\Deltak = \omega_k \delta$.

\begin{figure}
    \centering
    \includegraphics[width=0.76\columnwidth]{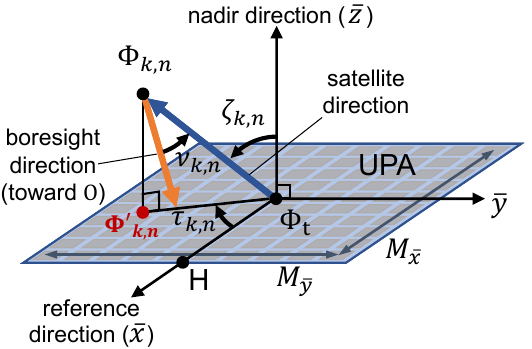}
    \caption{UPA of the ground terminal defined in a local coordinate system $(\bar{x},\bar{y},\bar{z})$, where the terminal is located at the origin. The angles $\nu_{k,n}$, $\zeta_{k,n}$, and $\tau_{k,n}$ denote the off-boresight angle, zenith angle, and azimuth angle associated with satellite $k$ at time slot~$n$, respectively.
    }
    \label{Fig:angles}
\end{figure}

The Earth-centered inertial (ECI) coordinate system is a fundamental reference frame commonly used in satellite and space object tracking. This system is inertial with respect to distant celestial objects, and thus its axes remain fixed without rotating with Earth. In contrast, the Earth-centered, Earth-fixed (ECEF) coordinate system rotates together with Earth. The position of the ground terminal on the Earth’s surface, specified by the latitude $\phi$ and longitude $\psi$, can be expressed in the ECEF coordinate system as
$\Phi_{\ter}^{\mathrm{ECEF}} = \left[ \re \cos \phi \cos \psi,\; \re \cos \phi \sin \psi,\; \re \sin \phi \right]^\mathrm{T}$.
To represent the terminal position in the ECI frame, the Earth’s rotation must be taken into account. Let $\omega_{\mathrm{E}}$ denote the Earth’s rotation rate. In time slot~$n$, the Earth rotates by an angle $\omega_{\mathrm{E}} n \delta$, which can be modeled as a rotation about the $z$-axis using the rotation matrix
\begin{equation}
    R_z(\omega_{\mathrm{E}} n \delta) =
    \begin{bmatrix}
        \cos (\omega_{\mathrm{E}} n \delta) & -\sin (\omega_{\mathrm{E}} n \delta) & 0 \\
        \sin (\omega_{\mathrm{E}} n \delta) & \cos (\omega_{\mathrm{E}} n \delta)  & 0 \\
        0                                   & 0                                    & 1
    \end{bmatrix}.
\end{equation}
Accordingly, the terminal position in the ECI frame at time slot~$n$ is given by
\begin{equation}\label{eq:pos_ter}
    \Phi_{\ter,n}
    =
    R_z(\omega_{\mathrm{E}} n \delta)\Phi_{\ter}^{\mathrm{ECEF}}
    =
    \begin{bmatrix}
        \re \cos\phi \cos \psi_n \\
        \re \cos\phi \sin \psi_n \\
        \re \sin\phi
    \end{bmatrix},
\end{equation}
where $\psi_n = \psi + \omega_{\mathrm{E}} n \delta$.

We assume that the satellites maintain the boresight of their receive beams fixed toward the subsatellite point, i.e., the nearest point on Earth to the satellite~\cite{TR38.821}. Let $\nu_{k,n}$ denote the off-boresight angle of satellite $k$ toward the ground terminal at time slot~$n$, defined as the angle between the terminal direction and the subsatellite point with respect to satellite $k$, as illustrated in Fig.~\ref{Fig:system_model}. Furthermore, let $\nu_k^{\mathrm{3dB}}$ denote the 3-dB beamwidth angle beyond which the receive beam power drops by more than 3 dB. Based on these definitions, the receive antenna gain at satellite $k$ in time slot~$n$ is modeled as~\cite{antBessel}
$G_{k,n} = \Gkmax\left(\frac{J_1(g_{k,n})}{2g_{k,n}} + 36\frac{J_3(g_{k,n})}{g_{k,n}^3}\right)^2$,
where $\Gkmax$ denotes the maximum antenna gain, $J_1(\cdot)$ and $J_3(\cdot)$ are the first- and third-order Bessel functions of the first kind, respectively, and $g_{k,n} \triangleq 2.07123\frac{\sin \nu_{k,n}}{\sin \nu_k^{\mathrm{3dB}}}$. Let $d_{k,n}$ denote the distance between the terminal and satellite $k$ at time slot~$n$. Then, the large-scale path loss between the terminal and satellite $k$ at time slot~$n$ is given by
$\ell_{k,n} = \left(\frac{c}{4\pi f_c}\right)^2 d_{k,n}^{-\kappa}$
  ,
where $c$ is the speed of light, $f_c$ is the carrier frequency, and $\kappa$ is the path-loss exponent.

The characteristics of satellite channels can be accurately modeled by a shadowed-Rician channel model, which explicitly accounts for random shadowing effects. However, its complicated distribution often leads to analytical intractability. Instead, the Nakagami-$m$ fading model is widely adopted as an effective alternative for tractable analysis in satellite communications~\cite{my23WCL}. Thus, we employ the Nakagami-$m$ fading model, which can flexibly capture the dominant line-of-sight (LoS) characteristics between the terminal and satellite~$k$ through the fading parameter $m_k$. Let $\tilde{h}_{k,n}$ denote the small-scale fading coefficient between the terminal and satellite $k$ at time slot $n$. Under the Nakagami-$m$ model, the channel power gain follows the Gamma distribution, i.e., $|\tilde{h}_{k,n}|^2 \sim \Gamma(m_k,\frac{1}{m_k})$. Then, the cumulative distribution function (CDF) of the channel power gain $|\tilde{h}_{k,n}|^2$ is given by
$F_{|\tilde{h}_{k,n}|^2}(x)=\frac{\gamma(m_k,m_k x)}{\Gamma(m_k)}$.

As illustrated in Fig.~\ref{Fig:angles}, we let $\zeta_{k,n}$ and $\tau_{k,n}$ denote the zenith and azimuth angles of satellite $k$ at time slot $n$, respectively, i.e., $(\zeta_{k,n},\tau_{k,n})$ represents the angle-of-departure (AoD) pair from the terminal toward satellite $k$. Based on the UPA structure, the corresponding array response vector toward satellite $k$ at time slot $n$ is given by $\mathbf{a}_{k,n}=\mathbf{a}_{k,n}^{\bar{x}} \otimes \mathbf{a}_{k,n}^{\bar{y}} \in \mathbb{C}^{M}$, where $\mathbf{a}_{k,n}^{\bar{x}} = [1, \:\: e^{-j\pi\sin\zeta_{k,n}\cos\tau_{k,n}}, \:\: \cdots, \:\: e^{-j\pi(M_{\bar{x}}-1)\sin\zeta_{k,n}\cos\tau_{k,n}}]^{\T}$ and $\mathbf{a}_{k,n}^{\bar{y}} = [1, \:\: e^{-j\pi\sin\zeta_{k,n}\sin\tau_{k,n}}, \:\: \cdots, \:\: e^{-j\pi(M_{\bar{y}}-1)\sin\zeta_{k,n}\sin\tau_{k,n}}]^{\T} $
represent the array response vectors along the $\bar{x}$- and $\bar{y}$-axes, respectively. Stacking the eavesdropper array responses column-wise yields $\mathbf{A}_{\eve,n} \triangleq [\mathbf{a}_{\eve_1,n}, \ldots, \mathbf{a}_{\eve_E,n}] \in \mathbb{C}^{M \times E}$.
Consequently, the overall channel vector between the terminal and satellite $k$ at time slot $n$ is modeled as~\cite{my24JSAC_EA}
\begin{align}\label{eq:channel_vector}
    \mathbf{h}_{k,n}=\tilde{h}_{k,n}\sqrt{\ell_{k,n}}\,\mathbf{a}_{k,n},
\end{align}
which incorporates the effects of small-scale fading, large-scale path loss, and array geometry. Let $\wn \in \mathbb{C}^{M}$ denote the beamforming vector employed by the terminal at time slot~$n$, subject to the transmit power constraint $\|\wn\|^2 \leq P_{\mathrm{max}}$, where $P_{\mathrm{max}}$ is the maximum transmit power. Then, the received signal-to-noise ratio (SNR) at satellite $k$ in time slot~$n$ is expressed as
\begin{align}\label{eq:SNR}
    \Gamkn =
    \begin{cases}
        \dfrac{G_{k,n}\left|\hkn^{\H}\wn\right|^2}{N_0 W}, & \text{if } d_{k,n}<d_k^{\mathrm{max}}, \\
        0,                                                 & \text{otherwise},
    \end{cases}
\end{align}
where $d_k^{\mathrm{max}}$ denotes the maximum distance for which satellite~$k$ remains visible due to Earth blockage, given by $d_k^{\mathrm{max}}=\sqrt{\ak(2\re+\ak)}$~\cite{my22TIFS}, $N_0$ denotes the noise power spectral density, and $W$ is the system bandwidth.
For non-colluding eavesdroppers, the secrecy performance is dominated by the most detrimental eavesdropper, i.e., the one achieving the highest received SNR among all eavesdroppers~\cite{my17VTC}. Accordingly, the instantaneous secrecy rate at time slot $n$ is given by
\begin{align}\label{eq:sec_rate}
    R_n=\left[\log_2\left(1+\Gamsn\right)-\log_2\left(1+ \underset{j\in\{1,2,\ldots,E\}}{\max} \Gamen\right)\right]^+.
\end{align}
The secrecy rate in \eqref{eq:sec_rate} is used to formulate the optimization problem in Section \ref{sec:problem_form}.

\section{Mathematical Preliminaries}\label{sec:math_prelim}
In this section, we first characterize the distance between the terminal and satellite $k\in\{\ser,\eve_j\}$, $j\in\{1,2,\cdots,E\}$, in time slot~$n$, i.e., $d_{k,n}$. Then, we analyze the satellite visibility constraint based on the orbital configuration of the satellites.
Additionally, we derive analytical expressions for the off-boresight angle $\nu_{k,n}$, the zenith angle $\zeta_{k,n}$, and the azimuth angle $\tau_{k,n}$.

\subsection{Distance Characterization}
The position of satellite $k\in\{\ser, \eve_j\}$ in time slot~$n$ is obtained by using three successive intrinsic rotations\footnote{In intrinsic rotations, successive rotations are conducted about the axes rotated by the last rotation matrix. The superscript prime ($\prime$) is added to indicate the new axes after an elemental rotation.} with sequence $z-x^{\prime}-z^{\prime\prime}$ as~\cite{my23WCL}
\begin{align}\label{eq:pos_sat}
     & \Phi_{k,n}
    = R_z(\raank)R_x(\inck)R_z(\aolkn) \mu_{x,k}\nonumber               \\
     & = (\re+\ak) \begin{bmatrix}
                       \cos\raank\cos\aolkn-\sin\raank \cos\inck \sin\aolkn \\
                       \cos\raank\cos\inck\sin\aolkn+\sin\raank \cos\aolkn  \\
                       \sin\inck \sin\aolkn
                   \end{bmatrix},
\end{align}
where $\mu_{x,k}=[\re+\ak \:\:\: 0 \:\:\: 0]^{\T}$. From \eqref{eq:pos_ter} and \eqref{eq:pos_sat}, the distance between the terminal and satellite $k\in\{\ser, \eve_j\}$ in time slot~$n$ is given by
\begin{align}\label{eq:dist}
     & d_{k,n}= \Vert \Phi_{k,n}-\Phi_{\ter,n} \Vert \nonumber                                                \\
     & =\Big(
    (\re + a_k)^2 \Big(\eta(\cdot;i_k^2,u_{k,n}^2)+ \eta(\Omega_k^2,i_k^2;u_{k,n}^2)\nonumber                 \\
     & +
    \eta(u_{k,n}^2;\Omega_k^2)
    +
    \eta(\Omega_k^2,u_{k,n}^2;\cdot)
    +\eta(i_k^2;\Omega_k^2,u_{k,n}^2)  \Big) \nonumber                                                        \\
     & - 2(\re + a_k) \re \Big(\eta(\Omega_k,i_k,\phi;u_{k,n},\psi_n) -\eta(i_k,\phi,\psi_n;\Omega_k,u_{k,n})
    \nonumber                                                                                                 \\
     & + \eta(\Omega_k,u_{k,n},\phi,\psi_n;\cdot) +
    \eta(u_{k,n},\phi;\Omega_k,\psi_n)
    +
    \eta(\cdot;i_k,u_{k,n},\phi) \Big)  \nonumber                                                             \\
     & + \re^2 \Big(
    \eta(\phi^2,\psi_n^2;\cdot) +
    \eta(\phi^2;\psi_n^2) +
    \eta(\cdot;\phi^2) \Big) \Big)^{1/2},
\end{align}
where $\eta(p_1^{o_1},\cdots,p_{\mathrm{A}}^{o_\mathrm{A}};q_1^{v_1},\cdots,q_{\mathrm{B}}^{v_\mathrm{B}})
    \delequal \prod_{t=1}^{\mathrm{A}}\cos^{o_t}{p_t}\times \prod_{t=1}^{\mathrm{B}}\sin^{v_t}{q_t}$ is a multiplication of cosine and sine functions.
These distances vary over time due to satellite mobility and directly affect the path loss $\ell_{k,n}$.

\subsection{Satellite Visibility Analysis}
Since the orbital plane of the serving satellite is obtained by two intrinsic rotations with orbital parameters $\raans$ and $\incs$, the normal vector of the orbital plane is obtained as
\begin{align}
    \mathbf{n}_{\ser}^{\perp}
    = R_z(\raans)R_x(\incs) \hat{z}
    = \begin{bmatrix}
          \sin \raans \sin \incs  \\
          -\cos \raans \sin \incs \\
          \cos \incs
      \end{bmatrix},
\end{align}
where $\hat{z}=[0 \:\: 0 \:\: 1]^{\T}$.
The angle between the terminal and the normal vector is obtained as
\begin{align}
    \beta^{\prime}
     & =\arccos \frac{\mathbf{n}_{\ser}^{\perp} \cdot \Phi_{\ter,n}}{\lVert \mathbf{n}_{\ser}^{\perp} \rVert \cdot \lVert \Phi_{\ter,n} \rVert}\nonumber \\
     & = \arccos (\cos \phi \cos \psi_n \sin \raans \sin \incs \nonumber                                                                                 \\
     & \quad - \cos \phi \sin \psi_n \cos \raans \sin \incs + \sin \phi \cos \incs),
\end{align}
and the angle between the terminal and the orbital plane is derived as
\begin{align}
    \beta=\Big\vert\beta^{\prime}-\frac{\pi}{2}\Big\vert.
\end{align}
According to~\cite{my24TWC}, the orbital plane is visible only when the following criterion holds:
\begin{align}\label{eq:vis_criterion}
    \beta  < \arccos\left(\frac{\re}{\re+\as}\right),
\end{align}
and the length of the visible arc of the orbit is given by
\begin{align}\label{eq:len_vis_arc}
    l
    =2(\re+\as)\arcsin\left(\sqrt{1-\frac{\sec^2\beta}{(1+\as/\re)^2}}\right).
\end{align}
\begin{rem}
    The orbital visibility depends on the terminal position and the orbital plane configuration as $\beta$ is determined by the terminal's latitude $\phi$ and longitude $\psi_n$, and the orbital elements $\raans$ and $\incs$.
\end{rem}
\begin{rem}
    The orbital visibility constraint, i.e., $\arccos\left(\frac{\re}{\re+\as}\right)$, increases with the altitude $\as$, which means that better orbital visibility could be achieved at a higher altitude because the satellite can be seen from a wider range of terminal positions.
\end{rem}
\begin{rem}
    The length of the visible arc increases as $\beta$ decreases, and reaches the maximum when $\beta$ becomes zero. This explains that the terminal achieves the highest visibility when the terminal is located on the orbital plane.
\end{rem}
Using \eqref{eq:len_vis_arc}, the angle between the two endpoints of the visible arc is obtained as
\begin{align}\label{eq:thetas}
    \Theta
    = \frac{l}{\re+\as}=2\arcsin\left(\sqrt{1-\frac{\sec^2\beta}{(1+\as/\re)^2}}\right).
\end{align}
From \eqref{eq:thetas} and the angular offset $\Deltas$ between consecutive time slots, the maximum number of visible time slots is obtained as
\begin{align}
    N_{\mathrm{vis}}
     & = \left\lceil \frac{\Theta}{\Deltas} \right\rceil
    = \left\lceil \frac{l}{(\re+\as)\omega_{\ser}\delta} \right\rceil \nonumber                                                                          \\
     & = \left\lceil \sqrt{\frac{4(\re + \as)^3}{\delta^2 G M_{\mathrm{E}}}}\arcsin\left(\sqrt{1-\frac{\sec^2\beta}{(1+\as/\re)^2}}\right) \right\rceil.
\end{align}

\begin{rem}
    The maximum number of visible time slots increases with the altitude of satellites because the feasible region satisfying the visibility constraint \eqref{eq:vis_criterion} enlarges. For example, when $\as=\{300, 600, 1200\}$ km, the terminal is visible under $\beta$ less than $\{17.2, 23.9, 32.7\}$ degrees. This indicates that higher altitudes are preferable when satellite visibility is important, even though it comes at the cost of reduced signal quality due to larger path loss.
\end{rem}

Now, we derive the visible range of the argument of latitude~$u_{\ser}$ based on the fact that the maximum elevation angle is achieved at the position in orbit with the minimum distance to the terminal.

\begin{lem}\label{lem:aol_max_elev_angle}
    The argument of latitude corresponding to the maximum elevation angle, i.e., $\aols + \frac{\Theta}{2}$, equals the angle between the ascending node and the terminal position projected onto the orbital plane.
\end{lem}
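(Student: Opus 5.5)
The plan is to use the geometric fact quoted just before the lemma: the elevation angle is maximized exactly where the distance to the terminal is minimized. Since the satellite stays on a sphere of radius $\re+\as$ centered at $\mathrm{O}$, minimizing $d_{\ser,n}=\|\Phi_{\ser,n}-\Phi_{\ter,n}\|$ over the argument of latitude is equivalent to maximizing the inner product $\Phi_{\ser,n}\cdot\Phi_{\ter,n}$. This follows from
\begin{align}
d_{\ser,n}^2=(\re+\as)^2+\re^2-2\,\Phi_{\ser,n}\cdot\Phi_{\ter,n}.
\end{align}
I will therefore work entirely in the orbital plane, treating the terminal position as fixed within one pass. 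If needed, $\psi_n$ can be frozen at the slot of interest, since Earth rotation only enters through $\psi_n$.

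Next I will introduce an orthonormal basis of the orbital plane taken from \eqref{eq:pos_sat}. The first vector is $\mathbf{e}_1=R_z(\raans)R_x(\incs)\hat{x}$, which points to the ascending node. The second is $\mathbf{e}_2=R_z(\raans)R_x(\incs)\hat{y}$. These satisfy $\Phi_{\ser,n}=(\re+\as)(\cos\aolsn\,\mathbf{e}_1+\sin\aolsn\,\mathbf{e}_2)$, and together with $\mathbf{n}_{\ser}^{\perp}$ they form an orthonormal frame. I then decompose $\Phi_{\ter,n}$ as its in-plane projection plus its normal component. The normal component is orthogonal to $\Phi_{\ser,n}$, so it drops out of the inner product. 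Writing the projection in polar form as $\rho(\cos\alpha\,\mathbf{e}_1+\sin\alpha\,\mathbf{e}_2)$, with $\rho=\re\cos\beta>0$, gives
\begin{align}
\Phi_{\ser,n}\cdot\Phi_{\ter,n}=(\re+\as)\rho\cos(\aolsn-\alpha).
\end{align}
This is maximized uniquely at $\aolsn=\alpha$, which is precisely the angle, measured from the ascending node, of the terminal's projection onto the orbital plane.

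The last step is to tie this point to $\aols+\Theta/2$. The visible arc is the set of $u$ with $d(u)<d^{\max}_{\ser}$, which by the display above is the set where $\cos(u-\alpha)$ exceeds a threshold. That set is an interval symmetric about $\alpha$, and its angular width is $\Theta$ from \eqref{eq:thetas}. The visible pass starts at $\aols$, so its midpoint is $\aols+\Theta/2$, and the midpoint equals $\alpha$, which proves the claim.

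I expect this last step to be the main obstacle. It depends on reading $\aols$ as the argument of latitude at the entry of the visible arc, i.e.\ at the horizon crossing, whereas the system model says transmission is restricted to the 3-dB beamwidth. I would justify this reading from \eqref{eq:len_vis_arc}: that arc length is derived from the Earth-blockage condition, and in that derivation the threshold set above is indeed $\{\cos(u-\alpha)>c\}$, with $c$ depending on $\beta$ and $\as$ but not on $u$. The symmetry then holds regardless of the exact threshold. Optionally, I would also give an explicit $\alpha$ via $\alpha=\operatorname{atan2}(\Phi_{\ter,n}\cdot\mathbf{e}_2,\Phi_{\ter,n}\cdot\mathbf{e}_1)$.
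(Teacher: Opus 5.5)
Your proposal is correct and uses the same construction as the paper: you project $\Phi_{\ter,n}$ onto the orbital plane and read off its angle in the basis $R_z(\raans)R_x(\incs)\hat{x}$, $R_z(\raans)R_x(\incs)\hat{y}$, and your reading of $\aols$ as the horizon-entry point (rather than the 3-dB entry) matches the paper, whose proof concludes that the visible range runs from $\aols$ to $\aols+\Theta$. Your argument is in fact more complete than the paper's, which simply asserts that the projected angle is the maximum-elevation point and equals $\aols+\Theta/2$; your identity $\Phi_{\ser,n}\cdot\Phi_{\ter,n}=(\re+\as)\rho\cos(\aolsn-\alpha)$, together with the symmetric visibility set $\cos(u-\alpha)>\sec\beta/(1+\as/\re)$ whose angular width is exactly $\Theta$, actually establishes both facts, and your use of $\operatorname{atan2}$ removes the quadrant ambiguity in the paper's $\arctan$.
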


\begin{IEEEproof}
    As illustrated in Fig.~\ref{Fig:max_elev_angle}, the terminal position projected onto the orbital plane is expressed as
    \begin{align}
        \bar{\Phi}_{\ter,n} = {\Phi}_{\ter,n} - ({\Phi}_{\ter,n} \cdot \mathbf{n}_{\ser}^{\perp})\mathbf{n}_{\ser}^{\perp}.
    \end{align}
    To obtain the angle between $\bar{\Phi}_{\ter,n}$ and the ascending node, we define the two basis vectors $\mathbf{e}_{x}^{\mathrm{orb}}$ and $\mathbf{e}_{y}^{\mathrm{orb}}$ of the orbital plane through basis rotation as
    \begin{align}
         & \mathbf{e}_{x}^{\mathrm{orb}}=R_z(\raans)R_x(\incs)\hat{x}=
        \begin{bmatrix}
            \cos\raans \\
            \sin\raans \\
            0
        \end{bmatrix},                           \\
         & \mathbf{e}_{y}^{\mathrm{orb}}=R_z(\raans)R_x(\incs)\hat{y}=
        \begin{bmatrix}
            -\sin\raans\cos\incs \\
            \cos\raans\cos\incs  \\
            \sin\incs
        \end{bmatrix}.
    \end{align}
    Using these basis vectors, we obtain the argument of latitude corresponding to the maximum elevation angle as $\arctan \left( \frac{\bar{\Phi}_{\ter,n} \cdot \mathbf{e}_{y}^{\mathrm{orb}}}{\bar{\Phi}_{\ter,n} \cdot \mathbf{e}_{x}^{\mathrm{orb}}}\right)$.
    As this angle equals $\aols + \frac{\Theta}{2}$, the initial argument of latitude for the visible arc is given by
    \begin{align}
        \aols
        = \arctan \left( \frac{\bar{\Phi}_{\ter,n} \cdot \mathbf{e}_{y}^{\mathrm{orb}}}{\bar{\Phi}_{\ter,n} \cdot \mathbf{e}_{x}^{\mathrm{orb}}}\right) - \frac{\Theta}{2}.
    \end{align}
    Thus, the visible range of the argument of latitude is from $\aols$ to $\aols + \Theta$.
\end{IEEEproof}

\begin{figure}
    \centering
    \includegraphics[width=0.8\columnwidth]{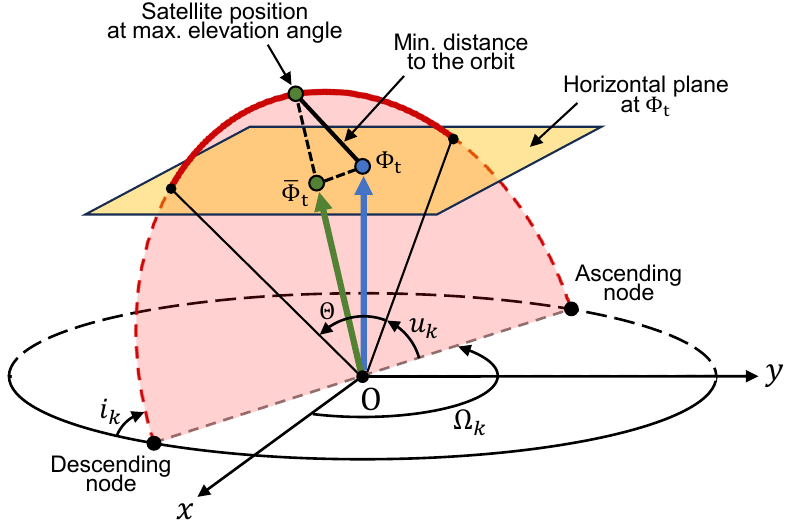}
    \caption{Satellite position at the maximum elevation angle. The red solid and dashed lines represent the visible and invisible parts of the LEO orbit.}
    \label{Fig:max_elev_angle}
\end{figure}

\subsection{Geometric Pointing Angle Characterization}
In this subsection, we derive analytical expressions for the off-boresight angle $\nu_{k,n}$, the zenith angle $\zeta_{k,n}$, and the azimuth angle $\tau_{k,n}$ of satellite $k$.
The off-boresight angle $\nu_{k,n}$ is defined as the angle between the boresight direction and the terminal with respect to satellite $k$.
To derive the off-boresight angle of satellite $k$, we apply the law of cosines to the triangle $\triangle \mathrm{O} \Phi_{\ter,n} \Phi_{k,n}$,
i.e., $\overline{\mathrm{O}\Phi_{\ter,n}}^2
    =\overline{\Phi_{\ter,n}\Phi_{k,n}}^2+\overline{\mathrm{O}\Phi_{k,n}}^2-2\overline{\Phi_{\ter,n}\Phi_{k,n}}\cdot\overline{\mathrm{O}\Phi_{k,n}} \cdot\cos \nu_{k,n}$.
Since $\overline{\mathrm{O}\Phi_{\ter,n}} = \re$, $\overline{\Phi_{\ter,n}\Phi_{k,n}} = d_{k,n}$, and $\overline{\mathrm{O}\Phi_{k,n}} = \re + a_k$, the off-boresight angle is expressed as
\begin{align}
    \nu_{k,n}
    =
    \arccos\left(\frac{d_{k,n}^2 + (\re + a_k)^2 - \re^2}{2 d_{k,n} (\re + a_k)}\right).
\end{align}

The zenith angle $\zeta_{k,n}$ is defined as the angle between the local vertical (zenith) direction at the terminal and the direction toward satellite $k$.
It is calculated as
\begin{align}
    \zeta_{k,n} = \arccos\left(\frac{\Phi_{\ter,n} \cdot (\Phi_{k,n} - \Phi_{\ter,n})}{\lVert \Phi_{\ter,n} \rVert \cdot \lVert \Phi_{k,n} - \Phi_{\ter,n} \rVert}\right).
\end{align}

To derive the azimuth angle $\tau_{k,n}$ with respect to the terminal, we first define the equation of the horizontal plane at the terminal's position $\Phi_{\ter,n}$ as $\Phi_{\ter,n} \cdot ([x, y, z]^{\T} - \Phi_{\ter,n}) = 0$, which is equivalent to
\begin{align}\label{eq:horizontal_plane}
     & (\cos \phi \cos \psi_n) x + (\cos \phi \sin \psi_n) y + (\sin \phi) z = \re.
\end{align}
This plane corresponds to the $\bar{x} \bar{y}$-plane in the local coordinate system ($\bar{x}$, $\bar{y}$, $\bar{z}$) where the terminal's antennas are located.
Let $\Phi^{\prime}_{k,n}$ denote the projection of $\Phi_{k,n}$ onto this horizontal plane, which is given by
\begin{align}
    \Phi^{\prime}_{k,n} = \Phi_{k,n} - \frac{\Phi_{\ter,n} \cdot (\Phi_{k,n} - \Phi_{\ter,n})}{\| \Phi_{\ter,n} \|^2} \Phi_{\ter,n}.
\end{align}
To determine the reference direction for the planar array, we use an arbitrary point $\mathrm{H}$ on the horizontal plane \eqref{eq:horizontal_plane}, as illustrated in Fig.~\ref{Fig:angles}.
This reference direction indicates the orientation of the planar array on the horizontal plane at the terminal.
Without loss of generality, we set $\mathrm{H} = \left[\frac{\re}{\cos \phi \cos \psi_n}, 0, 0\right]^{\T}$.
Thus, the azimuth angle is obtained as
\begin{align}
    \tau_{k,n} = \arccos\left(\frac{(\Phi^{\prime}_{k,n} - \Phi_{\ter,n}) \cdot (\mathrm{H} - \Phi_{\ter,n})}{\|\Phi^{\prime}_{k,n} - \Phi_{\ter,n}\| \cdot \|\mathrm{H} - \Phi_{\ter,n}\|}\right).
\end{align}
\noindent The above $\arccos$-based formulation illustrates the geometric relationship for general terminal latitudes where the reference point~$\mathrm{H}$ is well-defined.
The angles $\nu_{k,n}$, $\zeta_{k,n}$, and $\tau_{k,n}$ determine $G_{k,n}$ and $\mathbf{a}_{k,n}$, both of which vary with the time slot~$n$.

\section{Outage-Constrained Secrecy Rate Maximization Problem}\label{sec:problem_form}

In this section, we first formulate the secrecy rate maximization problem subject to transmit power and probabilistic outage constraints over the transmission slots. We then derive closed-form expressions for the connection and secrecy outage probabilities under the Nakagami-$m$ fading. Finally, we transform the probabilistic constraints into deterministic beam gain constraints, yielding a tractable problem reformulation that serves as the basis for the RL-based solution developed in Section~\ref{sec:rl_solution}.


As discussed in Section~\ref{sec:syst_model}, the terminal transmits uplink signals only when the serving satellite is within the 3-dB beamwidth, i.e., $\nu_{\mathrm{ser},n} < \nu_{\mathrm{ser}}^{\mathrm{3dB}}$. Therefore, the number of transmission slots, denoted by $N$, must always be less than the number of visible slots $N_{\mathrm{vis}}$, i.e., $N \leq N_{\mathrm{vis}}$.
The objective is to maximize the expected secrecy rate while satisfying the transmit power constraint and average outage probability constraints over the $N$ transmission slots. The secrecy rate maximization problem is formulated as follows:
\begin{align}
    (\mathrm{P1}) \quad \underset{\{\mathbf{w}_n\}_{n=1}^{N}}{\text{maximize}} \quad &
    \mathbb{E}\left[\sum_{n=1}^{N} R_n(\mathbf{w}_n)\right] \label{eq:rate_obj}                                                                                                                   \\
    \text{subject to} \quad
                                                                                     & \|\mathbf{w}_n\|^2 \leq P_{\mathrm{max}}, \quad \forall n, \label{eq:inst_power_const}                     \\
                                                                                     & \frac{1}{N}\sum_{n=1}^{N} P_n^{\mathrm{co}}(\mathbf{w}_n) \leq \epsilon_{\mathrm{co}}, \label{eq:co_const} \\
                                                                                     & \frac{1}{N}\sum_{n=1}^{N} P_n^{\mathrm{so}}(\mathbf{w}_n) \leq \epsilon_{\mathrm{so}}, \label{eq:so_const}
\end{align}
where the expectation in \eqref{eq:rate_obj} is over the small-scale fading realizations, $P_n^{\mathrm{co}}(\mathbf{w}_n)$ denotes the connection outage probability at slot $n$, which is the probability that the serving satellite cannot decode the message, $P_n^{\mathrm{so}}(\mathbf{w}_n)$ denotes the secrecy outage probability at slot $n$, defined as the probability that at least one eavesdropper can intercept the message, $\epsilon_{\mathrm{co}} \in (0,1)$ is the maximum tolerable average connection outage probability, and $\epsilon_{\mathrm{so}} \in (0,1)$ is the maximum tolerable average secrecy outage probability. The average formulation \eqref{eq:co_const} and \eqref{eq:so_const} is well-suited to the transmission slots because the terminal’s objective is to maintain reliable and secure communication on average over the transmission slots. This allows the beamforming policy to allocate resources adaptively across time slots with varying channel conditions.
Closed-form expressions for $P_n^{\mathrm{co}}$ and $P_n^{\mathrm{so}}$ are derived in the following subsection to enable tractable evaluation of the constraints \eqref{eq:co_const} and \eqref{eq:so_const}.

\subsection{Outage Probability Analysis}\label{sec:outage_analysis}

To characterize the per-slot outage probabilities appearing in \eqref{eq:co_const} and \eqref{eq:so_const}, we derive closed-form expressions under the Nakagami-$m$ fading assumption.

\subsubsection{Connection Outage Probability}
The connection outage probability is defined as the probability that the instantaneous data rate at the serving satellite $\ser$ falls below a target rate $\Rsn$. Mathematically, it is given by
\begin{align}\label{eq:Pco1}
    P_n^{\mathrm{co}}
     & = \mathbb{P}\left(\log_2(1+\Gamma_{\ser,n}) < \Rsn\right)\nonumber                                                                                                                              \\
     & =\mathbb{P}\left( \big| \mathbf{h}_{\ser,n}^{\mathrm{H}} \mathbf{w}_n \big|^2 < \frac{(2^{\Rsn} - 1) N_0 W}{G_{\ser,n}} \right)\nonumber                                                        \\
     & \mathop=^{(a)}\mathbb{P}\left( |\tilde{h}_{\ser,n}|^2 < \frac{(2^{\Rsn} - 1) N_0 W}{G_{\ser,n} \ell_{\ser,n} \big| \mathbf{a}_{\ser,n}^{\mathrm{H}} \mathbf{w}_n \big|^2} \right) \nonumber     \\
     & \mathop=^{(b)} \frac{1}{\Gamma(m_{\ser})}\gamma\left(m_{\ser}, \frac{m_{\ser}(2^{\Rsn} - 1) N_0 W}{G_{\ser,n} \ell_{\ser,n} \big| \mathbf{a}_{\ser,n}^{\mathrm{H}} \mathbf{w}_n \big|^2}\right),
\end{align}
where ($a$) follows from the channel decomposition $|\mathbf{h}_{\ser,n}^{\mathrm{H}} \mathbf{w}_n|^2 = \ell_{\ser,n} |\tilde{h}_{\ser,n}|^2 |\mathbf{a}_{\ser,n}^{\mathrm{H}} \mathbf{w}_n|^2$, and ($b$) follows from the CDF of the Gamma distribution.

\subsubsection{Secrecy Outage Probability}
The individual secrecy outage probability for the $j$-th eavesdropper, denoted $P_{j,n}^{\mathrm{so}}$, is defined as the probability that the instantaneous achievable rate at that eavesdropper exceeds a secrecy threshold $R_{\eve_j,n}$. 
Here, the threshold $R_{\eve_j,n}$ represents the redundancy rate allocated through wiretap coding to protect the confidential information, rather than the data rate intended for the eavesdropper or the directly tolerable leakage rate. Therefore, a secrecy outage occurs when the eavesdropper’s instantaneous channel capacity exceeds this redundancy margin, indicating that the confidential message may no longer be fully protected from information leakage.
Similar to the derivation of \eqref{eq:Pco1}, the closed-form expression is obtained as
\begin{align}\label{eq:Pso_j}
    \Pson
     & =\mathbb{P}(\log_2(1+\Gamma_{\eve_j,n}) > R_{\eve_j,n})\nonumber                                                                                                                                                 \\
     & =1-\frac{1}{\Gamma(m_{\eve_j})}\gamma\left(m_{\eve_j}, \frac{m_{\eve_j}(2^{R_{{\eve_j},n}} - 1) N_0 W}{G_{{\eve_j},n} \ell_{{\eve_j},n} \big| \mathbf{a}_{{\eve_j},n}^{\mathrm{H}} \mathbf{w}_n \big|^2}\right).
\end{align}
Under the assumption that the eavesdroppers operate independently without collusion, a secrecy outage event occurs when at least one of the $E$ eavesdroppers successfully intercepts the transmitted message. Since the fading channels across different eavesdroppers are statistically independent, due to their physical separation across distinct orbital planes specified by $\mathcal{O}_{\eve_j} = (a_{\eve_j}, \Omega_{\eve_j}, i_{\eve_j}, u_{\eve_j})$, the probability that no eavesdropper succeeds is given by the product of individual complement probabilities. The overall secrecy outage probability is therefore expressed as
\begin{align}\label{eq:Pso_all}
    P^{\mathrm{so}}_n
     & = 1 - \mathbb{P}\left(
    \bigcap_{j=1}^{E} \left\{
    \log_2\left(1+\Gamma_{{\eve_j},n}\right)
    \leq R_{{\eve_j},n}
    \right\}\right) \nonumber                                                          \\
     & = 1 -
    \prod_{j=1}^{E}
    \mathbb{P}\left(\log_2(1+\Gamma_{{\eve_j},n}) \leq R_{{\eve_j},n}\right) \nonumber \\
     & = 1 -
    \prod_{j=1}^{E}
    \left(1 - P^{\mathrm{so}}_{j,n}\right),
\end{align}
where the second equality holds due to the independence of the fading channels across different eavesdroppers.
From \eqref{eq:Pso_j} and \eqref{eq:Pso_all}, $P_n^{\mathrm{so}}$ is monotonically increasing in the beamforming gain $|\mathbf{a}_{\eve_j,n}^{\mathrm{H}} \mathbf{w}_n|^2$ toward each eavesdropper $j$. This observation motivates the design of beamforming vectors that suppress signal leakage toward potential eavesdroppers.

\subsection{Probabilistic Constraint Reformulation}\label{sec:det_approx}

The outage probability constraints in \eqref{eq:co_const} and \eqref{eq:so_const} involve the lower incomplete gamma function, which does not admit a closed-form expression amenable to direct optimization. To address this difficulty, a conservative bound on the regularized lower incomplete gamma function is employed, as stated in the following lemma.

\begin{lem}\label{lem:gamma_approx}
    For $m \geq 1$ and $x \geq 0$, the regularized lower incomplete gamma function satisfies Alzer's inequalities~\cite{Alzer}, expressed as
    \begin{align}\label{eq:gamma_approx}
        \left(1 - e^{-(m!)^{-1/m} x}\right)^m \leq \frac{\gamma(m, x)}{\Gamma(m)} \leq \left(1 - e^{-x}\right)^m.
    \end{align}
\end{lem}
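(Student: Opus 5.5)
The plan is to derive both sides of \eqref{eq:gamma_approx} from one structural fact about a single auxiliary function. Here $m! \triangleq \Gamma(m+1)$, so non-integer Nakagami parameters are also covered. Write $P(m,x)=\gamma(m,x)/\Gamma(m)$, which is the CDF of a $\Gamma(m,1)$ random variable, and define
\begin{align}
\psi(x) \triangleq -\log\left(1-P(m,x)^{1/m}\right), \quad x> 0 .
\end{align}
The two inequalities in \eqref{eq:gamma_approx} are then equivalent, respectively, to $\Gamma(m+1)^{-1/m}x \le \psi(x)$ and $\psi(x)\le x$. This is because $t\mapsto (1-e^{-t})^m$ is increasing, so the Alzer bounds amount to bounds on $\psi$ after inverting that map.

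\textbf{Step 1 (boundary behaviour).} I would first check the behaviour of $\psi$ at the two ends.

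Near $x=0$, we have $P(m,x)= \frac{x^m}{\Gamma(m+1)}(1+O(x))$. Hence $P^{1/m}\sim \Gamma(m+1)^{-1/m}x$, which gives $\psi(0^+)=0$ and $\psi'(0^+)=\Gamma(m+1)^{-1/m}$.

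As $x\to\infty$, we have $1-P(m,x)=\Gamma(m,x)/\Gamma(m)\sim x^{m-1}e^{-x}/\Gamma(m)$, so $1-P^{1/m}\sim \frac{1}{m}\,x^{m-1}e^{-x}/\Gamma(m)$. This gives $\psi(x)=x-(m-1)\log x+O(1)$, and hence $\psi'(x)\to 1$.

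\textbf{Step 2 (convexity reduction).} Suppose $\psi$ is convex on $(0,\infty)$. Then $\psi'$ is nondecreasing, which yields two conclusions.
\begin{itemize}
\item From $\psi'\le \lim_{x\to\infty}\psi'=1$ and $\psi(0^+)=0$, we get $\psi(x)\le x$. This is the upper bound.
\item Convexity together with $\psi(0^+)=0$ makes $\psi(x)/x$ nondecreasing. Since its limit at $0^+$ is $\Gamma(m+1)^{-1/m}$, we get $\psi(x)\ge \Gamma(m+1)^{-1/m}x$. This is the lower bound.
\end{itemize}
As a consistency check, the slopes are compatible with Step 1: $\Gamma(m+1)^{-1/m}\le 1$ precisely because $m\ge 1$. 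This is where the hypothesis on $m$ enters.

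\textbf{Step 3 (main obstacle: convexity of $\psi$).} This step carries the whole proof. Set $g=P^{1/m}$. Then $\psi''\ge 0$ is equivalent to
\begin{align}
g''(1-g)+(g')^2\ge 0 ,
\end{align}
where $g'=\frac{1}{m}P^{1/m-1}p$ and $p(x)=x^{m-1}e^{-x}/\Gamma(m)$ is the Gamma density.

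My plan is to substitute $t=xs$, which writes $P(m,x)=\frac{x^m}{\Gamma(m)}\int_0^1 s^{m-1}e^{-xs}\,\mathrm{d}s$. The resulting inequality between integrals should then follow from a Cauchy--Schwarz or Chebyshev-type argument, using the log-concavity of the Gamma density for $m\ge 1$. This is the classical route of~\cite{Alzer}.

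If the direct computation becomes unwieldy, I would fall back on two options.
\begin{itemize}
\item For the upper bound alone at integer $m$, there is a short probabilistic argument. Write the $\Gamma(m,1)$ variable as $\sum_{i=1}^m T_i$ with $T_i$ i.i.d.\ $\mathrm{Exp}(1)$. Then $\{\sum_i T_i\le x\}\subseteq\{\max_i T_i\le x\}$, which gives $P(m,x)\le(1-e^{-x})^m$.
\item For the general statement, and in particular the lower bound, I would invoke Alzer's result~\cite{Alzer} directly, as the statement itself indicates.
\end{itemize}
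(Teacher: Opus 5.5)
\textbf{Gap at Step 3.} The paper does not prove this lemma; it imports Alzer's inequalities directly from~\cite{Alzer}. Parts of your proposal are sound and go beyond that:
\begin{itemize}
\item The reformulation $\Gamma(m+1)^{-1/m}x\le\psi(x)\le x$ is correct.
\item Your asymptotics at $0^+$ and at $\infty$ are correct.
\item The coupling argument ($T_i\le\sum_i T_i$) is a complete, elementary proof of the upper bound for integer $m$.
\end{itemize}
As a self-contained proof, however, the proposal has a genuine gap. Step~3 carries the entire content and is only announced. Convexity of $\psi$ says that the distribution with CDF $P(m,\cdot)^{1/m}$ has increasing hazard rate, and the tool you name does not deliver this. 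Log-concavity of the Gamma density $p$ makes $P$ and $1-P$ log-concave, but it does not survive the fractional power. The density of $P^{1/m}$ satisfies $\log g'=(\tfrac1m-1)\log P+\log p-\log m$, and $(\tfrac1m-1)\log P$ is \emph{convex} for $m>1$. So you cannot conclude that $g'$ is log-concave, which is the usual route to an increasing hazard rate. Nor does the substitution $t=xs$ set up a Cauchy--Schwarz or Chebyshev inequality, because the condition $g''(1-g)+(g')^2\ge 0$ involves $1-P^{1/m}$, which is not an integral of that form. Note also that $m\ge 1$ must do real work inside Step~3, since for $0<m<1$ both inequalities reverse. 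The slope check $\Gamma(m+1)^{-1/m}\le 1$ in Step~2 is therefore not where the hypothesis is actually used.

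\textbf{A weaker target suffices.} Convexity of $\psi$ is not contradicted locally. For example, $\psi''(0^+)=c\,(c-\tfrac{2}{m+1})$ with $c=\Gamma(m+1)^{-1/m}$, and this is nonnegative for integer $m$ by AM--GM. But convexity is strictly more than you need. Both bounds already follow from $x\mapsto\psi(x)/x$ being nondecreasing on $(0,\infty)$, because its limits at $0^+$ and at $\infty$ are exactly the constants $\Gamma(m+1)^{-1/m}$ and $1$ you computed in Step~1. That ratio monotonicity is the natural thing to try to prove.

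As written, the lower bound for every $m$, and the upper bound for non-integer $m$, rest on your fallback citation of~\cite{Alzer}. That puts the proposal on the same footing as the paper rather than giving an independent proof.
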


Applying the upper bound from Lemma~\ref{lem:gamma_approx} to \eqref{eq:Pco1}, we obtain the upper bound of the connection outage probability as
\begin{align}\label{eq:Pco_bound}
    P_n^{\mathrm{co}} = \frac{\gamma(m_{\ser}, x_{\ser,n})}{\Gamma(m_{\ser})} \leq \left(1 - e^{-x_{\ser,n}}\right)^{m_{\ser}} \triangleq c_{\mathrm{co},n},
\end{align}
where $x_{\ser,n} \delequal \frac{m_{\ser} (2^{\Rsn} - 1) N_0 W}{G_{\ser,n} \ell_{\ser,n} |\mathbf{a}_{\ser,n}^{\mathrm{H}} \mathbf{w}_n|^2}$.
For the secrecy outage, we apply the lower bound in \eqref{eq:gamma_approx} to each eavesdropper's CDF in \eqref{eq:Pso_j}, which gives $1 - P_{j,n}^{\mathrm{so}} \geq (1 - e^{-(m_{\eve_j}!)^{-1/m_{\eve_j}} x_{\eve_j,n}})^{m_{\eve_j}}$. Substituting this into \eqref{eq:Pso_all} yields the upper bound of the secrecy outage probability, i.e.,
\begin{align}\label{eq:Pso_bound}
    P_n^{\mathrm{so}} \leq 1 - \prod_{j=1}^{E} \left(1 - e^{-(m_{\eve_j}!)^{-1/m_{\eve_j}} x_{\eve_j,n}}\right)^{m_{\eve_j}} \triangleq c_{\mathrm{so},n}.
\end{align}
where $x_{\eve_j,n} \delequal \frac{m_{\eve_j} (2^{R_{\eve_j,n}} - 1) N_0 W}{G_{\eve_j,n} \ell_{\eve_j,n} |\mathbf{a}_{\eve_j,n}^{\mathrm{H}} \mathbf{w}_n|^2}$, analogous to $x_{\ser,n}$.
Since $P_n^{\mathrm{co}} \leq c_{\mathrm{co},n}$ and $P_n^{\mathrm{so}} \leq c_{\mathrm{so},n}$ at every time slot, the averages satisfy $\frac{1}{N}\sum_n P_n^{\mathrm{co}} \leq \frac{1}{N}\sum_n c_{\mathrm{co},n}$ and $\frac{1}{N}\sum_n P_n^{\mathrm{so}} \leq \frac{1}{N}\sum_n c_{\mathrm{so},n}$. Therefore, imposing constraints on the average of $c_{\mathrm{co},n}$ and $c_{\mathrm{so},n}$ constitutes a sufficient condition for the original average outage constraints \eqref{eq:co_const} and \eqref{eq:so_const}. The connection and secrecy outage bounds exhibit asymmetric tightness, yet both remain conservative.
The resulting cost functions are closed-form expressions of elementary functions, readily amenable to gradient-based policy optimization in the CMDP framework developed in the following section.

\section{Proposed RL-Based Beamformer Design}\label{sec:rl_solution}

The optimization problem (P1) is non-convex due to the non-concave secrecy rate objective and the dependence between beamforming vectors across time slots.
Iterative methods such as SCA face two issues in this setting. First, the per-slot cost is high, since each slot requires repeatedly linearizing the non-concave secrecy rate and solving a dimension-$2M$ quadratic program from multiple initializations. Second, the constraints are handled per-slot rather than as time-averaged budgets. We therefore use SCA only as an offline benchmark in Section~\ref{sec:sim_res}.

To address these limitations, we adopt an RL-based approach in which the beamforming vector is obtained by a single forward pass of the policy network, eliminating the per-slot iterative optimization.  The problem is reformulated as a CMDP and solved via a PD-SAC algorithm.

\subsection{CMDP Formulation}\label{sec:cmdp}

As discussed in Section~\ref{sec:syst_model}, the terminal transmits only when it is located within the serving satellite's coverage region, i.e., $\nu_{\ser,n} < \nu_{\ser}^{\mathrm{3dB}}$.
In the RL framework, the agent acts only during these transmission slots, and the remaining slots are skipped without agent interaction.
The sequential beamforming optimization is modeled as a CMDP defined by the tuple $(\mathcal{S}, \mathcal{A}, P, r, \mathbf{c}, \boldsymbol{\epsilon})$.
Here, $\mathcal{S}$ is the state space, $\mathcal{A}$ is the action space, $P: \mathcal{S} \times \mathcal{A} \times \mathcal{S} \to [0,1]$ is the transition probability function, and $r: \mathcal{S} \times \mathcal{A} \to \mathbb{R}$ is the per-step reward function. The vector $\mathbf{c} = [c_{\mathrm{co}}, c_{\mathrm{so}}]^{\mathrm{T}}$ collects the per-step cost functions for connection and secrecy outage, with realization $\mathbf{c}_n = [c_{\mathrm{co},n}, c_{\mathrm{so},n}]^{\mathrm{T}}$ at slot $n$, and $\boldsymbol{\epsilon} = [\epsilon_{\mathrm{co}}, \epsilon_{\mathrm{so}}]^{\mathrm{T}}$ is the vector of corresponding cost thresholds.

\subsubsection{State Space}
The state $s_n \in \mathcal{S}$ at time slot $n$ captures the geometry-derived channel characteristics, the previous-slot secrecy rate, and temporal position within the transmission slots, i.e., 
\begin{align}\label{eq:state_space}
    s_n = \bigl\{ & \bar{R}_{n-1},\, \tfrac{n}{N},\, \tilde{\ell}_{\ser,n},\, \tilde{G}_{\ser,n},\, \tilde{\boldsymbol{\ell}}_{\eve,n},\, \tilde{\mathbf{G}}_{\eve,n},\, \nonumber                \\
                  & \operatorname{Re}(\mathbf{a}_{\ser,n}),\, \operatorname{Im}(\mathbf{a}_{\ser,n}),\, \operatorname{Re}(\mathbf{A}_{\eve,n}),\, \operatorname{Im}(\mathbf{A}_{\eve,n}) \bigr\}.
\end{align}
Here, $\bar{R}_{n-1} \approx [\log_2(1 + \bar{\Gamma}_{\ser,n-1}) - \log_2(1 + \max_j \bar{\Gamma}_{\eve_j,n-1})]^+$  is the long-term secrecy rate from the previous time slot with $\bar{\Gamma}_{k,n-1} = G_{k,n-1} \ell_{k,n-1} |\mathbf{a}_{k,n-1}^{\mathrm{H}} \mathbf{w}_{n-1}|^2 / (N_0 W)$. Since accurate instantaneous CSI is difficult to obtain in practice for satellite links, $\bar{R}_{n-1}$ is constructed solely from the satellite orbital geometry and the previously applied beamforming vector, both of which are known to the terminal, thereby providing a deterministic estimate of the achievable secrecy performance without requiring real-time channel estimation. The term $n/N \in [0,1]$ is the normalized time-slot index, indicating its relative position in time. The normalized path losses are defined as $\tilde{\ell}_{k,n} = (10\log_{10}(\ell_{k,n}) + 200)/20$ to scale the values near zero for improved neural network training stability, with $\tilde{\boldsymbol{\ell}}_{\eve,n} = [\tilde{\ell}_{\eve_1,n}, \ldots, \tilde{\ell}_{\eve_{E},n}]^{\mathrm{T}}$.
The normalized antenna gains $\tilde{G}_{k,n} = 10\log_{10}(G_{k,n})/G_{\max}^{\mathrm{dBi}}$, where $G_{\max}^{\mathrm{dBi}} \triangleq 10\log_{10}(\Gkmax)$ is the peak antenna gain in dBi, reflect the angular gain attenuation toward each satellite, with $\tilde{\mathbf{G}}_{\eve,n} = [\tilde{G}_{\eve_1,n}, \ldots, \tilde{G}_{\eve_{E},n}]^{\mathrm{T}}$. Concatenating all components and splitting the complex array responses into real and imaginary parts yields a real-valued state vector of dimension $D_\mathrm{s} = 4 + 2E + 2M + 2EM$.

\subsubsection{Action Space}
To enforce the power constraint $\|\mathbf{w}_n\|^2 \leq P_{\mathrm{max}}$ while keeping the actor network differentiable, the actor outputs a latent action that consists of the beamforming direction $\tilde{q}_{n}^{\mathrm{dir}} \in \mathbb{R}^{2M}$ and the transmit power $\tilde{q}_{n}^{\mathrm{pow}} \in \mathbb{R}$, and is deterministically mapped to the beamforming vector.
Here, $\tilde{q}_{n}^{\mathrm{dir}}$ stacks the real and imaginary parts of $\mathbf{w}_n \in \mathbb{C}^{M}$, a Cartesian representation chosen to avoid the discontinuity of cyclic phase variables.
These two components are concatenated into the latent action vector $\tilde{q}_n = [(\tilde{q}_{n}^{\mathrm{dir}})^{\mathrm{T}},\, \tilde{q}_{n}^{\mathrm{pow}}]^{\mathrm{T}} \in \mathbb{R}^{2M+1}$. The direction components are bounded by $\tanh(\cdot)$ and normalized to a unit vector, while the power component is mapped to $(0, P_{\mathrm{max}})$ via the sigmoid function $\sigma(\cdot)$ as follows:
\begin{align}
    \hat{\mathbf{d}}_n & = \frac{\tanh(\tilde{q}_{n}^{\mathrm{dir}})}{\|\tanh(\tilde{q}_{n}^{\mathrm{dir}})\|}, \label{eq:action_dir} \\
    P_n                & = P_{\mathrm{max}} \cdot \sigma(\tilde{q}_{n}^{\mathrm{pow}}), \label{eq:action_power}                       \\
    \mathbf{w}_n       & = \sqrt{P_n}\, \hat{\mathbf{d}}_n. \label{eq:action_norm}
\end{align}
With a slight abuse of notation, $\mathbf{w}_n$ denotes either the complex beamforming vector $\mathbf{w}_n \in \mathbb{C}^{M}$ from  Section~\ref{sec:syst_model} or its real-valued representation $[\mathrm{Re}(\mathbf{w}_n)^{\mathrm{T}}, \mathrm{Im}(\mathbf{w}_n)^{\mathrm{T}}]^{\mathrm{T}} \in \mathbb{R}^{2M}$ used as the neural network input. The critic networks receive $\mathbf{w}_n \in \mathbb{R}^{2M}$. The policy samples the latent action $\tilde{q}_n \in \mathbb{R}^{2M+1}$ from a diagonal Gaussian, and its squashed counterpart $(u_n, v_n) = (\tanh(\tilde{q}_n^{\mathrm{dir}}),\, P_{\mathrm{max}}\sigma(\tilde{q}_n^{\mathrm{pow}}))$ defines the policy density and entropy used in the SAC update.

\subsubsection{Reward and Cost Functions}
The objective is to maximize the expected cumulative secrecy rate. The per-step reward is the instantaneous secrecy rate defined in \eqref{eq:sec_rate}, which includes the small-scale fading realization: $r_n(s_n, \mathbf{w}_n) = R_n(\mathbf{w}_n).$
Note that $\bar{R}_{n-1}$ uses only the deterministic geometry, whereas the reward $R_n$ reflects the actual channel realization including small-scale fading.
Since the policy is trained offline through simulation, the reward can be computed using the actual channel realization generated in the simulator, including small-scale fading.
The per-step cost functions are the upper-bound cost functions $c_{\mathrm{co},n}$ and $c_{\mathrm{so},n}$ derived in Section~\ref{sec:det_approx}. The power consumption is not modeled as a cost function, since the sigmoid parameterization \eqref{eq:action_power} ensures $\|\mathbf{w}_n\|^2 = P_n \leq P_{\mathrm{max}}$ at every transmission slot.

\subsubsection{Optimization Objective}
The CMDP objective is to find a policy $\pi: \mathcal{S} \to \mathcal{P}(\mathcal{A})$ that maximizes the expected cumulative reward $J_{\mathrm{R}}$ while satisfying the average cost constraints $J_{\mathrm{co}}$ and $J_{\mathrm{so}}$
\cite{book99_cmdp}:
\begin{align}
    (\mathrm{P2}) \quad \underset{\pi}{\text{maximize}} \quad & J_{\mathrm{R}}(\pi) = \mathbb{E}_{\pi} \left[ \sum_{n=1}^{N} r_n \right] \label{eq:RL_obj}                                                                    \\*
    \text{subject to} \quad                                         & J_{\mathrm{co}}(\pi) = \mathbb{E}_{\pi} \left[ \frac{1}{N} \sum_{n=1}^{N} c_{\mathrm{co},n} \right] \leq \epsilon_{\mathrm{co}}, \label{eq:RL_const_co} \\*
                                                              & J_{\mathrm{so}}(\pi) = \mathbb{E}_{\pi} \left[ \frac{1}{N} \sum_{n=1}^{N} c_{\mathrm{so},n} \right] \leq \epsilon_{\mathrm{so}}, \label{eq:RL_const_so}
\end{align}
where $\mathbb{E}_{\pi}$ denotes the expectation over the trajectories induced by the policy $\pi$. As discussed in Section~\ref{sec:det_approx}, satisfying \eqref{eq:RL_const_co} and \eqref{eq:RL_const_so} also ensures the original average outage constraints \eqref{eq:co_const} and \eqref{eq:so_const}. 
The formulation is undiscounted, i.e., $\gamma_{\mathrm{d}} = 1$. The finite horizon $N$ bounds the cumulative reward and cost, and the undiscounted objective \eqref{eq:RL_obj} matches the original problem \eqref{eq:rate_obj} exactly.

\subsection{PD-SAC Algorithm}\label{sec:pd_sac}

We solve the CMDP using Lagrangian relaxation combined with the SAC framework~\cite{achiam17_cpo,19_RCPO,paternain23_safe,yang21_wcsac,haarnoja18_sac,haarnoja18b_sac}.

\subsubsection{Lagrangian Relaxation}
The constrained problem \eqref{eq:RL_obj}--\eqref{eq:RL_const_so} is transformed into a min-max problem via Lagrangian duality 
\begin{align}\label{eq:lagrangian}
    \min_{\boldsymbol{\lambda} \succeq 0} \max_{\pi} \mathcal{L}(\pi, \boldsymbol{\lambda}),
\end{align}
where the Lagrangian is 
\begin{align}
    \mathcal{L}(\pi, \boldsymbol{\lambda}) = J_{\mathrm{R}}(\pi) - \sum_{i \in \{\mathrm{co}, \mathrm{so}\}} \lambda_{i}(J_{i}(\pi) - \epsilon_{i}),
\end{align}
and $\boldsymbol{\lambda} = [\lambda_{\mathrm{co}}, \lambda_{\mathrm{so}}]^{\mathrm{T}} \succeq 0$ is the vector of Lagrange multipliers for the average outage constraints.

\subsubsection{Maximum Entropy Framework}
Following the SAC framework~\cite{haarnoja18_sac}, the objective is augmented with entropy regularization to enhance exploration. Since the secrecy rate is non-concave in the beamforming vector, this regularization promotes diverse action selection and mitigates convergence to local optima. The stochastic policy $\pi_{\boldsymbol{\theta}}$ is parameterized by the actor neural network weights $\boldsymbol{\theta}$; hence, the maximization with respect to $\pi$ in \eqref{eq:lagrangian} is performed by optimizing the actor-network parameters $\boldsymbol{\theta}$. The entropy-regularized objective becomes
\begin{align}\label{eq:entropy_obj}
    J_{\mathrm{ent}}(\boldsymbol{\theta}) = \mathbb{E}_{\pi_{\boldsymbol{\theta}}} \left[ \sum_{n=1}^{N} \left( r_n - \frac{1}{N} \sum_{i \in \{\mathrm{co}, \mathrm{so}\}} \lambda_i c_{i,n} + \alpha \mathcal{H}(\pi_{\boldsymbol{\theta}}(\cdot|s_n)) \right) \right],
\end{align}
where $\alpha > 0$ is the temperature parameter scaling the entropy term, with larger $\alpha$ encouraging exploration and smaller $\alpha$ encouraging exploitation, and $\mathcal{H}(\pi_{\boldsymbol{\theta}}(\cdot|s_n)) = -\mathbb{E}_{\tilde{q}_n \sim \pi_{\boldsymbol{\theta}}(\cdot|s_n)}[\log \pi_{\boldsymbol{\theta}}(u_n, v_n\,|\,s_n)]$ is the entropy of the policy, which quantifies the randomness of the sampled actions. Throughout, $\mathbb{E}_{x \sim p}[\cdot]$ denotes the expectation with respect to $x$ sampled from the distribution $p$.

\begin{table}[t]
    \centering
    \renewcommand{\arraystretch}{1.3}
    \caption{Network Architectures. The reward and cost critics each comprise two networks with the listed architecture.}
    \label{tab:network_arch}
    \resizebox{\columnwidth}{!}{%
        \begin{tabular}{|c|c|c|c|}
            \hline
            \textbf{Layer} & \textbf{Actor}                                                     & \textbf{Reward Critic}                               & \textbf{Cost Critic}                                 \\
            \hline
            Input          & $s \in \mathbb{R}^{D_\mathrm{s}}$                                  & $(s, \mathbf{w}) \in \mathbb{R}^{D_\mathrm{s} + 2M}$ & $(s, \mathbf{w}) \in \mathbb{R}^{D_\mathrm{s} + 2M}$ \\
            Hidden 1       & Linear (256)+ReLU                                                  & Linear (256)+ReLU                                    & Linear (256)+ReLU                                    \\
            Hidden 2       & Linear (256)+ReLU                                                  & Linear (256)+ReLU                                    & Linear (256)+ReLU                                    \\
            Output         & $(\boldsymbol{\mu}, \boldsymbol{\sigma}) \in \mathbb{R}^{2(2M+1)}$ & $Q^{\mathrm{R}} \in \mathbb{R}$                      & $\mathbf{Q}^{\mathrm{C}} \in \mathbb{R}^{2}$         \\
            \hline
        \end{tabular}%
    }
\end{table}

\subsubsection{Network Architecture}
As summarized in Table~\ref{tab:network_arch}, the proposed SAC agent employs the following neural networks.
\begin{itemize}
    \item \textbf{Actor} $\pi_{\boldsymbol{\theta}}(\tilde{q}_n|s_n)$: A neural network that outputs the mean $\boldsymbol{\mu}_{\boldsymbol{\theta}}(s_n)$ and log standard deviation $\log \boldsymbol{\sigma}_{\boldsymbol{\theta}}(s_n)$ of a $(2M{+}1)$-dimensional diagonal Gaussian over the latent $\tilde{q}_n$, which is deterministically mapped to the squashed action $(u_n, v_n)$ via the tanh and sigmoid transformations in \eqref{eq:action_dir} and \eqref{eq:action_power}. Latent actions are sampled via the reparameterization trick \cite{haarnoja18_sac} as $\tilde{q}_n = \boldsymbol{\mu}_{\boldsymbol{\theta}}(s_n) + \boldsymbol{\sigma}_{\boldsymbol{\theta}}(s_n) \odot \boldsymbol{\xi}$, $\boldsymbol{\xi} \sim \mathcal{N}(\mathbf{0}, \mathbf{I})$, then mapped to $\mathbf{w}_n$ via \eqref{eq:action_dir}--\eqref{eq:action_norm}.
    \item \textbf{Reward Critics} $Q^{\mathrm{R}}_b(s_n,\mathbf{w}_n)$, $b \in \{1,2\}$: Two networks, parameterized by $\boldsymbol{\theta}^{\mathrm{R}}_b$, whose minimum is taken as a pessimistic estimate that mitigates the overestimation bias of the value function~\cite{fujimoto18_td3} for the policy update.
    \item \textbf{Cost Critics} $\mathbf{Q}^{\mathrm{C}}_b(s_n,\mathbf{w}_n) \in \mathbb{R}^2$, $b \in \{1,2\}$: Two networks, parameterized by $\boldsymbol{\theta}^{\mathrm{C}}_b$, each with two output heads; the components $[\mathbf{Q}^{\mathrm{C}}_b]_{\mathrm{co}}$ and $[\mathbf{Q}^{\mathrm{C}}_b]_{\mathrm{so}}$ estimate the connection and secrecy outage costs, respectively. With $\gamma_{\mathrm{d}} = 1$, $[\mathbf{Q}^{\mathrm{C}}_b]_i = \frac{1}{N}\sum_{n'=n}^{N}\mathbb{E}_{\pi_{\boldsymbol{\theta}}}[c_{i,n'} \mid s_n, \mathbf{w}_n]$, $i\in\{\mathrm{co},\mathrm{so}\}$, which matches the scale of the average constraints $J_i$ in \eqref{eq:RL_const_co} and \eqref{eq:RL_const_so}. The element-wise maximum is then taken as a pessimistic cost estimate in the policy update \eqref{eq:actor_loss}. In contrast to the minimum used for the reward critics, this maximum prevents underestimation of the costs~\cite{yang21_wcsac}.
\end{itemize}

\subsubsection{Training Updates}
The complete training procedure is summarized in Algorithm~\ref{alg:pd_sac}, where the specific network updates are performed as follows:
\begin{itemize}
    \item \textbf{Critic Update:} The reward critics are trained by minimizing the mean-squared error between their output and the regression target:
          \begin{align}\label{eq:critic_loss}
              L(\boldsymbol{\theta}^{\mathrm{R}}_b) = \mathbb{E}_{(s_n,\mathbf{w}_n,r_n,s_{n+1}) \sim \mathcal{R}} \left[ \left( Q^{\mathrm{R}}_b(s_n,\mathbf{w}_n) - y_{\mathrm{R}} \right)^2 \right],
          \end{align}
          where the expectation is taken over transition samples drawn from the replay buffer $\mathcal{R}$, and the target is
          \begin{align}
              &y_{\mathrm{R}} = r_n + (1\!-\!d) \,\nonumber\\
              &\times \mathbb{E}_{\tilde{q}_{n+1} \sim \pi_{\boldsymbol{\theta}}(\cdot\,|\,s_{n+1})} \!\bigl[ \min_{b\in\{1,2\}} \bar{Q}^{\mathrm{R}}_b(s_{n+1},f(\tilde{q}_{n+1})) \nonumber\\
              &\qquad\quad -\, \alpha \log \pi_{\boldsymbol{\theta}}(u_{n+1}, v_{n+1}\,|\,s_{n+1}) \bigr].
          \end{align}
          Here, $f(\tilde{q}) = \mathbf{w}$ denotes the deterministic mapping from the latent action to the transmit beamforming vector via \eqref{eq:action_dir}--\eqref{eq:action_norm}, $d \in \{0,1\}$ is the episode-termination indicator, and $\bar{\boldsymbol{\theta}}^{\mathrm{R}}_b$ denotes the target parameters maintained as an exponential moving average (EMA) of the online parameters $\boldsymbol{\theta}^{\mathrm{R}}_b$. The cost critic is updated similarly by minimizing the mean-squared error between its output and the regression target as
          \begin{align}\label{eq:cost_target}
              \mathbf{y}_{\mathrm{C}} = \frac{1}{N}\mathbf{c}_n + (1 - d) \, \max_{b\in\{1,2\}} \bar{\mathbf{Q}}^{\mathrm{C}}_b(s_{n+1}, f(\tilde{q}_{n+1})),
          \end{align}
          where $\bar{\boldsymbol{\theta}}^{\mathrm{C}}_b$ denotes the corresponding target parameters for the cost critic, and the element-wise maximum over the two cost critics acts as a pessimistic target to avoid underestimating the costs.

          \begin{algorithm}[t]
    \caption{PD-SAC for Secure Beamforming}
    \label{alg:pd_sac}
    \begin{algorithmic}[1]
        \REQUIRE Thresholds $\epsilon_{\mathrm{co}}$, $\epsilon_{\mathrm{so}}$; learning rates $\eta_{\theta}$, $\eta_{\theta^{\mathrm{R}}}$, $\eta_{\theta^{\mathrm{C}}}$, $\eta_{\lambda}$, $\eta_{\alpha}$; transmit power $P_{\mathrm{max}}$; target entropy $\bar{\mathcal{H}} = -(2M+1)$; soft update rate $\rho$; cost EMA decay $\chi$
        \STATE Initialize actor $\pi_{\boldsymbol{\theta}}$, reward critics $Q^{\mathrm{R}}_{1,2}$ with parameters $\boldsymbol{\theta}^{\mathrm{R}}_{1,2}$, cost critics $\mathbf{Q}^{\mathrm{C}}_{1,2}$ with parameters $\boldsymbol{\theta}^{\mathrm{C}}_{1,2}$ and two output heads each, and temperature $\alpha$
        \STATE Initialize target networks $\bar{\boldsymbol{\theta}}^{\mathrm{R}}_{1,2} \leftarrow \boldsymbol{\theta}^{\mathrm{R}}_{1,2}$, $\bar{\boldsymbol{\theta}}^{\mathrm{C}}_{1,2} \leftarrow \boldsymbol{\theta}^{\mathrm{C}}_{1,2}$
        \STATE Initialize dual variables $\tilde{\lambda}_{\mathrm{co}} \leftarrow \tilde{\lambda}_0$, $\tilde{\lambda}_{\mathrm{so}} \leftarrow \tilde{\lambda}_0$; cost EMA $\hat{c}_i \leftarrow \epsilon_i$; replay buffer $\mathcal{R} \leftarrow \emptyset$
        \FOR{each episode}
        \STATE Reset environment; observe initial state $s_1$
        \FOR{$n = 1$ to $N$}
        \STATE Sample latent action $\tilde{q}_n \sim \pi_{\boldsymbol{\theta}}(\cdot|s_n)$; compute $\mathbf{w}_n$ via \eqref{eq:action_dir}--\eqref{eq:action_norm}
        \STATE Execute $\mathbf{w}_n$; observe $r_n$, $c_{\mathrm{co},n}$, $c_{\mathrm{so},n}$, $s_{n+1}$
        \STATE Update cost EMA: $\hat{c}_i \leftarrow \chi\, \hat{c}_i + (1-\chi)\, \bar{c}_{i,n}$ \hfill $\triangleright$ Eq.~\eqref{eq:cost_ema}
        \STATE Store $(s_n, \mathbf{w}_n, r_n, c_{\mathrm{co},n}, c_{\mathrm{so},n}, s_{n+1}, d)$ in $\mathcal{R}$
        \IF{update step}
        \STATE Sample minibatch from $\mathcal{R}$
        \STATE Update reward critics via \eqref{eq:critic_loss} and cost critics via the MSE to \eqref{eq:cost_target}
        \STATE Update actor by minimizing \eqref{eq:actor_loss}
        \STATE Update dual: minimize $L_{\mathrm{dual}}(\tilde{\boldsymbol{\lambda}})$ in \eqref{eq:dual_update_co} via gradient descent
        \STATE Update temperature $\alpha$ by minimizing \eqref{eq:temp_loss}
        \STATE Soft update targets: $\bar{\boldsymbol{\theta}}^{\mathrm{R}}_b \leftarrow \rho \boldsymbol{\theta}^{\mathrm{R}}_b + (1-\rho)\bar{\boldsymbol{\theta}}^{\mathrm{R}}_b$, $\bar{\boldsymbol{\theta}}^{\mathrm{C}}_b \leftarrow \rho \boldsymbol{\theta}^{\mathrm{C}}_b + (1-\rho)\bar{\boldsymbol{\theta}}^{\mathrm{C}}_b$, $b\in\{1,2\}$
        \ENDIF
        \ENDFOR
        \ENDFOR
    \end{algorithmic}
\end{algorithm}

    \item \textbf{Actor Update:} The policy is updated by minimizing the following loss, which corresponds to maximizing the entropy-regularized objective $J_{\mathrm{ent}}$ in \eqref{eq:entropy_obj} with the cumulative reward and cost terms estimated by the critics
          \begin{align}\label{eq:actor_loss}
              &L(\boldsymbol{\theta}) = \mathbb{E}_{s_n \sim \mathcal{R},\, \tilde{q}_n \sim \pi_{\boldsymbol{\theta}}(\cdot\,|\,s_n)} \Big[ \alpha \log \pi_{\boldsymbol{\theta}}(u_n, v_n\,|\,s_n) \nonumber \\
              & - \min_{b\in\{1,2\}} Q^{\mathrm{R}}_b(s_n, f(\tilde{q}_n)) + \boldsymbol{\lambda}^{\mathrm{T}} \max_{b\in\{1,2\}} \mathbf{Q}^{\mathrm{C}}_b(s_n, f(\tilde{q}_n)) \Big],
          \end{align}
          where the element-wise maximum over the two cost critics matches the pessimistic cost target in \eqref{eq:cost_target}.
          Following the standard SAC formulation \cite{haarnoja18_sac}, the action log-density is
          \begin{align}\label{eq:log_prob}
              &\log  \pi_{\boldsymbol{\theta}}(u_n, v_n\,|\,s_n) \!=\! \sum_{i=1}^{2M} \Bigl[\log p(\tilde{q}_{n,i}^{\mathrm{dir}})\!  -\! \log\bigl(1\!-\!\tanh^2(\tilde{q}_{n,i}^{\mathrm{dir}})\bigr)\Bigr] \nonumber\\ &+ \log p(\tilde{q}_n^{\mathrm{pow}})  - \log \Bigl[P_{\mathrm{max}}\, \sigma(\tilde{q}_n^{\mathrm{pow}})\bigl(1-\sigma(\tilde{q}_n^{\mathrm{pow}})\bigr)\Bigr],
          \end{align}
          where $p(\cdot)$ is the latent Gaussian density of the pre-activation samples, and the remaining terms are the log-Jacobian correction terms induced by the tanh and sigmoid transformations, which convert the latent density into the density of the transformed action $(u_n,v_n)$. The $\ell_2$-normalization $\hat{\mathbf{d}}_n = u_n/\|u_n\|$ in \eqref{eq:action_dir} that produces the unit-norm direction is a deterministic post-processing step; its gradient with respect to $\boldsymbol\theta$ propagates through the actor loss \eqref{eq:actor_loss} via the chain rule, as $f$ enters both the reward and cost critic terms.\footnote{The reference implementation pre-scales $u$ by $\sqrt{P_{\mathrm{max}}}$ before $\ell_2$-normalization for numerical conditioning; since the normalization erases any positive scalar, $\mathbf{w}_n$ is unchanged and the scaling contributes only a state-independent additive constant to $\log\pi_{\boldsymbol\theta}$, which has no effect on the actor gradient.} The temperature $\alpha$ is adjusted via \eqref{eq:temp_loss}.

    \item \textbf{Dual Update:} The Lagrange multipliers are updated via gradient descent on the dual loss. To ensure non-negativity, i.e., $\lambda_i \geq 0$, we adopt a log-space parameterization $\lambda_i = \exp(\tilde{\lambda}_i)$ with $\tilde{\lambda}_i \in \mathbb{R}$, stacked as $\tilde{\boldsymbol{\lambda}} = [\tilde{\lambda}_{\mathrm{co}}, \tilde{\lambda}_{\mathrm{so}}]^{\mathrm{T}}$. The log-space parameters are initialized to a small value $\tilde{\lambda}_0 < 0$, so that the multipliers begin near zero and the policy first learns basic beamforming before constraint pressure is gradually applied.

          To reduce the bias from buffer samples drawn under earlier policies, we maintain an on-policy EMA of the per-step costs as
          \begin{align}\label{eq:cost_ema}
              \hat{c}_i \leftarrow \chi\, \hat{c}_i + (1 - \chi)\, \bar{c}_{i,n}, \quad i \in \{\mathrm{co}, \mathrm{so}\}
          \end{align}
          where $\bar{c}_{i,n}$ is the average cost across parallel environments at step $n$, and $\chi \in (0,1)$ is the decay factor. The dual loss is then
          \begin{align}
              L_{\mathrm{dual}}(\tilde{\boldsymbol{\lambda}}) = -\sum_{i \in \{\mathrm{co}, \mathrm{so}\}} \lambda_i \left(\hat{c}_i - \epsilon_i\right), \label{eq:dual_update_co}
          \end{align}
          which is minimized with respect to $\tilde{\boldsymbol{\lambda}}$ via gradient descent with learning rate $\eta_{\lambda}$.

    \item \textbf{Temperature Update:} The temperature $\alpha$ is adjusted to maintain a target entropy $\bar{\mathcal{H}} = -\dim(\tilde{q}_n) = -(2M+1)$, as is conventional for continuous action spaces~\cite{haarnoja18b_sac}, by minimizing
          \begin{align} \label{eq:temp_loss}
               L(\alpha) = \mathbb{E}_{s_n \sim \mathcal{R},\,\tilde{q}_n \sim \pi_{\boldsymbol{\theta}}(\cdot\,|\,s_n)} \Big[ &-\alpha \big( \log \pi_{\boldsymbol{\theta}}(u_n, v_n\,|\,s_n) + \bar{\mathcal{H}} \big) \Big].
          \end{align}
          When the policy entropy falls below $\bar{\mathcal{H}}$, $\alpha$ increases to encourage more stochastic actions, and vice versa.
\end{itemize}

\begin{rem}[Dual Update Dynamics and Stability]\label{rem:dual_dynamics}
              Since $\partial L_{\mathrm{dual}}/\partial \tilde{\lambda}_i = -\lambda_i (\hat{c}_i - \epsilon_i)$, the log-space parameter $\tilde{\lambda}_i$ increases when the EMA cost exceeds the threshold $\epsilon_i$, raising $\lambda_i$ and strengthening the penalty on constraint-violating actions. Conversely, when the EMA cost $\hat{c}_i$ is within budget, $\lambda_i$ decreases, allowing the policy to prioritize reward maximization. The on-policy EMA $\hat{c}_i$ in \eqref{eq:dual_update_co} mitigates oscillatory dual updates caused by stale replay-buffer samples and stabilizes the primal-dual training process.
          \end{rem}


\subsection{Computational Complexity Analysis}\label{sec:complexity}

Let $D_{\tilde{q}} = 2M+1$ denote the latent action dimension, and $H$ the hidden layer size.
For the proposed algorithm, during inference, the actor network requires a single forward pass with complexity $\mathcal{O}(D_\mathrm{s} H + H^2 + H D_{\tilde{q}})$. Since $D_{\tilde{q}} \ll H$ in practice, this simplifies to $\mathcal{O}(D_\mathrm{s} H + H^2)$. During training, the dominant cost per update step arises from the forward and backward passes through the actor and critic networks, yielding a complexity of $\mathcal{O}(|\mathcal{B}|((D_\mathrm{s} + 2M) H + H^2))$, where $|\mathcal{B}|$ is the minibatch size. The dual variable updates use only the scalar EMA costs $\hat{c}_i$ and require $\mathcal{O}(1)$ operations. Thus, the overall per-step training complexity is $\mathcal{O}(|\mathcal{B}|((D_\mathrm{s} + 2M)H + H^2))$.

For MRT, the beamforming vector is computed as $\mathbf{w}_n = \sqrt{P_{\mathrm{max}}} \mathbf{a}_{\ser,n} / \|\mathbf{a}_{\ser,n}\|$, which requires $\mathcal{O}(M)$ operations per time slot. For ZF, the beamforming vector is obtained by projecting the serving array response onto the null space of the eavesdropper array response matrix, with complexity $\mathcal{O}(M^2 E + M E^2 + E^3)$ per slot. For the per-slot SCA benchmark, each transmission slot is solved by a multistart SCA with $R_{\mathrm{SCA}}$ initializations, each refined over $K_{\mathrm{SCA}}$ outer iterations; every outer iteration solves a convex second-order cone program of dimension $2M$ by an interior-point method whose $K_{\mathrm{IP}}$ iterations are each dominated by an $\mathcal{O}(M^3)$ KKT factorization, yielding a per-slot cost of $\mathcal{O}(K_{\mathrm{SCA}} R_{\mathrm{SCA}} K_{\mathrm{IP}} M^3)$.

While the proposed RL approach incurs an offline training stage, its online execution complexity is strictly $\mathcal{O}(D_\mathrm{s} H + H^2)$, scaling linearly with the number of antennas $M$. By contrast, ZF requires algebraic projections that scale as $\mathcal{O}(M^2 E + M E^2 + E^3)$ and becomes infeasible as $E$ approaches $M$, while per-slot SCA scales as $\mathcal{O}(K_{\mathrm{SCA}} R_{\mathrm{SCA}} K_{\mathrm{IP}} M^3)$ and is unsuitable for real-time execution. The trained RL policy outputs the beamformer via a single forward pass, well-suited for real-time deployment under high satellite mobility.

 \begin{table}[t]
    \centering
    \renewcommand{\arraystretch}{1.2}
    \caption{Simulation Parameters}
    \label{Table:Sim_Param}
    \begin{tabular}{|l|c|}
        \hline
        \textbf{Parameter}                                         & \textbf{Value}                                                  \\
        \hline
        Earth's rotation rate \(\omega_{\mathrm{E}}\)
                                                                   & \(7.2921150 \times 10^{-5}\) rad/s                              \\
        \hline
        Radius of Earth $\re$                                      & 6,378 km                                                        \\
        \hline
        Gravitational constant $G$                                 & $6.674 \times 10^{-11}$ $\mathrm{m}^3/\mathrm{kg}/\mathrm{s}^2$ \\
        \hline
        Mass of Earth $M_{\mathrm{E}}$                             & $5.972 \times 10^{24}$ kg                                       \\
        \hline
        Speed of light $c$                                         & $3 \times 10^8$ m/s                                             \\
        \hline
        Noise spectral density $N_0$                               & $-174$ dBm/Hz                                                   \\
        \hline
        Altitude of the serving satellite $\as$                    & 600 km                                                          \\
        \hline
        Carrier frequency $\fc$                                    & 2 GHz                                                           \\
        \hline
        Path-loss exponent $\kappa$                                & 2                                                               \\
        \hline
        Maximum receive antenna gain $\Gkmax$                      & 24 dBi                                                          \\
        \hline
        Maximum transmit power $P_{\mathrm{max}}$                  & 40 dBm (10 W)                                                   \\
        \hline
        Avg.\ connection outage threshold $\epsilon_{\mathrm{co}}$ & 0.3                                                             \\
        \hline
        Avg.\ secrecy outage threshold $\epsilon_{\mathrm{so}}$    & 0.3                                                             \\
        \hline
        Bandwidth $W$                                              & 100 MHz                                                         \\
        \hline
        Time slot duration $\delta$                                & 1 s                                                             \\
        \hline
        Evaluated eavesdropper counts $E$                          & $\{1,2,3,4,5,6,7\}$                                             \\
        \hline
        Number of antennas $M = M_{\bar{x}} \times M_{\bar{y}}$    & $4 \times 4 = 16$                                               \\
        \hline
        Nakagami fading parameter $m_{\ser}, m_{\eve_j}$           & 2                                                               \\
        \hline
        Target service rate $R_{\ser,n}$                           & 0.5 bps/Hz                                                      \\
        \hline
        Target eavesdropper rate $R_{\eve_j,n}$                    & 1.0 bps/Hz                                                      \\
        \hline
        Serving satellite inclination $i_{\ser}$                   & $89.5^{\circ}$                                                  \\
        \hline
        Serving satellite RAAN $\Omega_{\ser}$                     & $45^{\circ}$                                                    \\
        \hline
        Eavesdropper altitude $a_{\eve_j}$                        & 600 km                                                          \\
        \hline
        Eavesdropper inclination $i_{\eve_j}$                     & $89^{\circ}$                                                    \\
        \hline
        Eavesdropper RAAN $\Omega_{\eve_j}$                       & $90^{\circ}$                                                    \\
        \hline
        Initial position offset $\Delta u_{\eve_j}$               & $|\Delta u_{\eve_j}| \leq 5^{\circ}$                         \\
        \hline
        Terminal latitude $\phi$                                   & $90^{\circ}$ (North Pole)                                       \\
        \hline
        3-dB beamwidth angle $\nu^{\mathrm{3dB}}$                  & $15^{\circ}$                                                    \\
        \hline
        Discount factor $\gamma_{\mathrm{d}}$                      & 1.0                                                             \\
        \hline
        Soft update rate $\rho$                                    & 0.005                                                           \\
        \hline
        Learning rate (Actor/Critic) $\eta_{\theta}, \eta_{\theta^{\mathrm{R}}}$  & $3 \times 10^{-4}$                                              \\
        \hline
        Lagrange-multiplier learning rate $\eta_{\lambda}$            & $3 \times 10^{-3}$                                              \\
        \hline
        Replay buffer size                                         & $10^6$                                                          \\
        \hline
        Batch size                                                 & 256                                                             \\
        \hline
        Target entropy $\bar{\mathcal{H}}$                         & $-(2M+1)$                                                       \\
        \hline
        Number of parallel environments                            & 50                                                              \\
        \hline
        Hidden layer size                                          & 256                                                             \\
        \hline
        Number of hidden layers                                    & 2                                                               \\
        \hline
        Lagrange-multiplier warmup steps $T_{\mathrm{warm}}$                      & $2 \times 10^{4}$                                               \\
        \hline
        Initial log-space Lagrange multiplier $\tilde{\lambda}_0$                              & $-3.0$                                                          \\
        \hline
        $\lambda$ clamp $[\lambda_{\min}, \lambda_{\max}]$         & $[0.01, 100]$                                                   \\
        \hline
        Cost EMA decay $\chi$                                      & 0.995                                                           \\
        \hline
        Actor EMA rate $\xi$                                       & 0.005                                                           \\
        \hline
    \end{tabular}
\end{table}

\section{Simulation Results}\label{sec:sim_res}

The simulation parameters listed in Table~\ref{Table:Sim_Param} are used unless otherwise stated.  
A separate PD-SAC policy is trained for each~$E$ on the deployment scenario using 50 parallel environment instances.\footnote{While the instantaneous CSI of eavesdroppers is unavailable, their cardinality $E$ is observable from the tracked LEO constellation, since adversarial satellites can be enumerated via ephemeris-based space situational awareness even when their channels remain uncertain.}
We compare the proposed PD-SAC against MRT, ZF, per-slot SCA, and PD-PPO.  For the per-slot SCA benchmark, the non-convex problem at each transmission slot, which maximizes the secrecy rate evaluated with the geometry-derived average SNRs subject to the power constraint and the per-slot counterparts $c_{\mathrm{co},n} \leq \epsilon_{\mathrm{co}}$ and $c_{\mathrm{so},n} \leq \epsilon_{\mathrm{so}}$ of the average outage constraints, is solved by a multistart SCA, where the secrecy rate is iteratively linearized and each convex subproblem is solved by an interior-point method~\cite{markswright78}. The resulting beamformers are evaluated in the same simulation environment as the learned policies.  The on-policy alternative PD-PPO shares the same CMDP formulation, network architecture, and dual variable structure as PD-SAC~\cite{schulman2017proximal}, with PPO-specific hyperparameters tuned for stable convergence. 

Several measures are adopted to stabilize the primal-dual training. The Lagrange multipliers are clamped to $[\lambda_{\min}, \lambda_{\max}]$ to keep the constraint penalty active while avoiding oscillatory updates. The dual variables are further frozen during an initial warmup of $T_{\mathrm{warm}}$ training steps. For evaluation and deployment, the EMA-averaged policy $\pi_{\bar{\boldsymbol{\theta}}}$ is used in place of the training policy to smooth short-term oscillations induced by the primal-dual dynamics. The corresponding hyperparameters are listed in Table~\ref{Table:Sim_Param}.

Fig.~\ref{fig:episode_RL} shows the training convergence of PD-SAC for $E=3$. In Fig.~\ref{fig:episode_RL}(a), PD-SAC surpasses the MRT baseline within the first few episodes and the ZF baseline after around $50$ episodes, converging to approximately $2.8$ \,bps/Hz, within roughly $7\%$ of the offline SCA benchmark. 
As shown in Figs.~\ref{fig:episode_RL}(b) and~\ref{fig:episode_RL}(c), PD-SAC rapidly drives the connection outage below the threshold, while the secrecy outage is steered toward $\epsilon_{\mathrm{so}}=0.3$ as the dual variable actively enforces the constraint boundary. The primal-dual updates remove the need for manual penalty tuning.

\begin{figure}[t]
    \centering
    \includegraphics[width=.950\columnwidth]{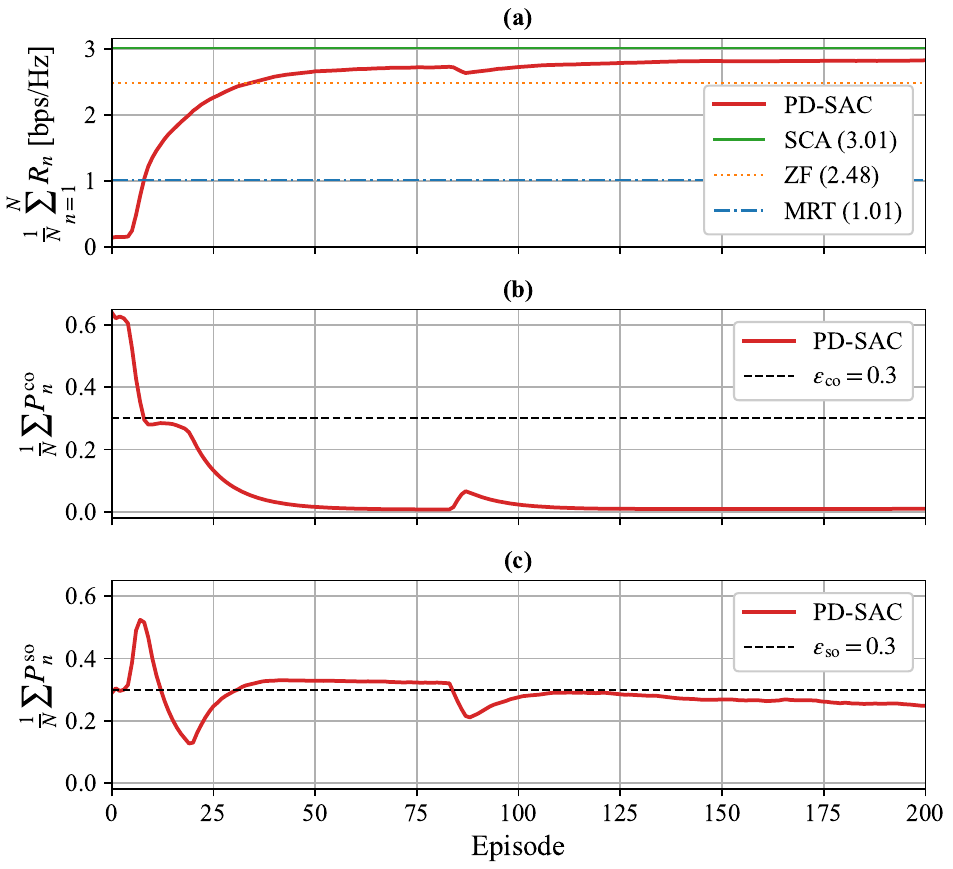}
    \caption{Training convergence of PD-SAC for the $E=3$: (a)~average secrecy rate, (b)~average connection outage probability, and (c)~average secrecy outage probability. The dashed lines in (b) and (c) indicate the average outage constraint thresholds $\epsilon_{\mathrm{co}} = \epsilon_{\mathrm{so}} = 0.3$.}
    \label{fig:episode_RL}
\end{figure}

\begin{figure}[t]
    \centering
    \includegraphics[width=.950\columnwidth]{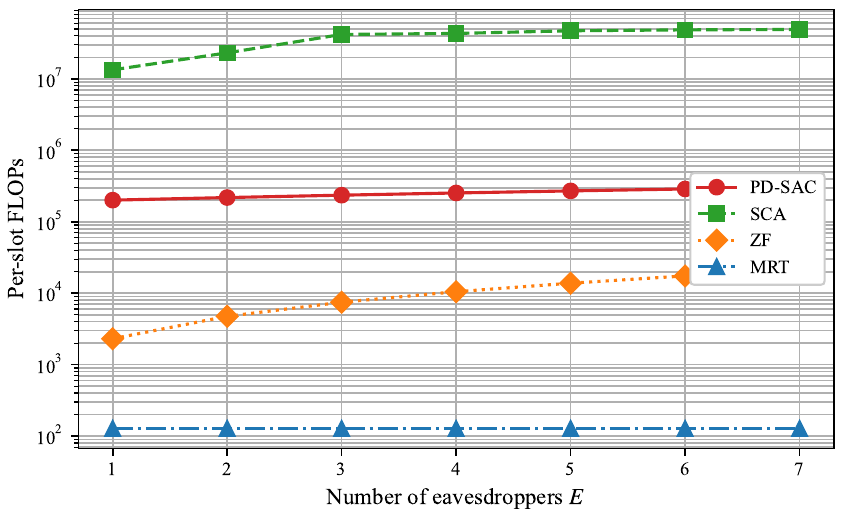}
    \caption{Per-slot FLOP counts versus the number of eavesdroppers $E$. }
    \label{fig:complexity}
\end{figure}

\begin{figure*}[t]
    \centering
    \includegraphics[width=.950\textwidth]{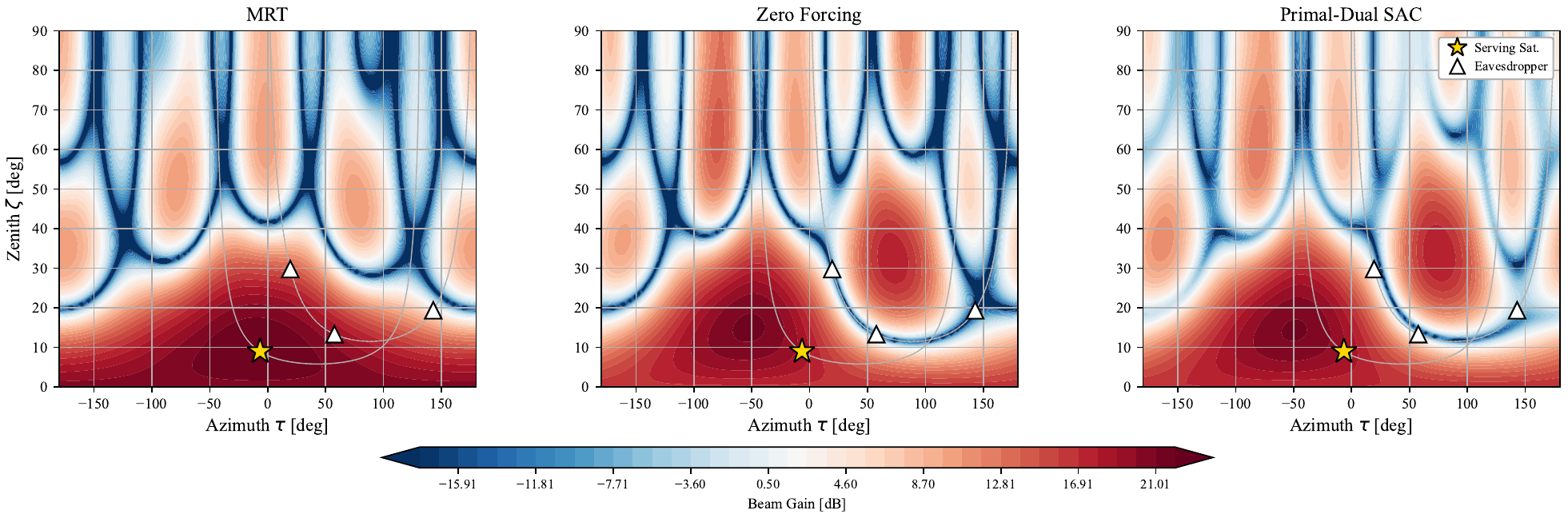}
    \caption{Beam patterns of MRT, ZF, and the proposed PD-SAC at time slot $n=376$. The red star and cyan triangles represent the directions of the serving satellite and the eavesdroppers, respectively. 
    The eavesdropper orbital parameters are set to $a_{\eve_j}=600$\,km, $i_{\eve_j}=89^{\circ}$, $\Omega_{\eve_j}=90^{\circ}$, and $\Delta u_{\eve_j}=\{+2^{\circ}, -2^{\circ}, 0^{\circ}\}$.}
    \label{Fig:beam_pattern}
\end{figure*}

Fig.~\ref{fig:complexity} shows the per-slot floating-point operation (FLOP) count for each method across $E \in \{1,\ldots,7\}$. The dominant FLOP counts are approximately given by $8M$ for MRT, $8M^2 E + 8M E^2 + 2 E^3 + 8M$ for ZF, $2[D_\mathrm{s} H + H^2 + H \cdot 2(2M+1)]$ for PD-SAC, and $R_{\mathrm{SCA}} K_{\mathrm{SCA}} K_{\mathrm{IP}} [8M^3 + \mathcal{O}(M^2 E)]$ for SCA, consistent with the complexity analysis in Section~\ref{sec:complexity}. Using $R_{\mathrm{SCA}}=10$ restarts per slot and the solver-measured averages $K_{\mathrm{SCA}}\approx 9.7$ and $K_{\mathrm{IP}}\approx 12$, at $E=3$ PD-SAC inference requires approximately $2.4\times 10^5$ FLOPs whereas the per-slot SCA solve requires about $4.2\times 10^7$ FLOPs, over two orders of magnitude larger.

      \begin{table}[t]
        \centering
        \caption{Performance comparison of beamforming schemes.}
        \label{tab:comparison}
        \renewcommand{\arraystretch}{1.2}
        \begin{tabular}{l|c|cc|cc}
            \hline
            \textbf{Method} & $\frac{1}{N}\sum_{n=1}^{N} R_n$ & \multicolumn{2}{c|}{$\frac{1}{N}\sum P_n^{\mathrm{co}}$} & \multicolumn{2}{c}{$\frac{1}{N}\sum P_n^{\mathrm{so}}$}                                   \\
                            &                                         & Bound                                                    & Exact                                                   & Bound          & Exact          \\
            \hline
            MRT             & 1.01                                    & \textbf{0.001}                                           & \textbf{0.000}                                          & 0.988          & 0.988          \\
            ZF              & 2.48                                    & 0.237                                                    & 0.199                                                   & \textbf{0.000} & \textbf{0.000} \\
            PD-PPO          & 2.18                                    & 0.025                                                    & 0.014                                                   & 0.232          & 0.226          \\
            SCA             & 3.01                                    & 0.015                                                    & 0.008                                                   & 0.112          & 0.090          \\
            \textbf{PD-SAC} & 2.80                                    & 0.011                                                    & 0.006                                                   & 0.232          & 0.222          \\
            \hline
        \end{tabular}
    \end{table}

    Table~\ref{tab:comparison} compares the average secrecy rate and outage probabilities for the five schemes with $E=3$. The ``Bound'' uses the upper bounds from Lemma~\ref{lem:gamma_approx}, and ``Exact'' uses the true incomplete gamma function. Among the deployable policies, PD-SAC attains an average secrecy rate of $2.8$ \,bps/Hz while satisfying both average outage constraints under the upper bound, namely $\frac{1}{N}\sum P_n^{\mathrm{co}}=0.011 \leq \epsilon_{\mathrm{co}}$ and $\frac{1}{N}\sum P_n^{\mathrm{so}}=0.232 \leq \epsilon_{\mathrm{so}}$. The corresponding exact values, $0.006$ and $0.222$, also lie below the thresholds. This corresponds to a $177\%$ gain over MRT and a $13\%$ gain over ZF.  SCA attains $3.01$\,bps/Hz as an offline benchmark; the proposed PD-SAC closes this gap to within $7\%$ through a single forward pass at inference, with much less complexity as shown in Fig.~\ref{fig:complexity}. PD-PPO attains $2.18$\,bps/Hz, $22\%$ below PD-SAC.
    MRT achieves the lowest connection outage since it maximizes the serving link gain, but it ignores the eavesdroppers entirely. Conversely, ZF completely eliminates the secrecy outage by nulling the eavesdropper channels, but this comes at the cost of a high connection outage, as the null constraint significantly reduces the beamforming gain toward the serving satellite.
    The evaluations using the derived bounds demonstrate close agreement with the exact formulations, confirming that the bound-based approximation incurs almost no practical performance loss.

        Fig.~\ref{Fig:beam_pattern} illustrates the beam gain $|\hkn^{\H} \mathbf{w}_{n}|^2$ as a function of the azimuth angle $\tau_{k,n}$ and the zenith angle $\zeta_{k,n}$ at time slot $n=376$. As expected, MRT consistently directs its main lobe toward the serving satellite but provides no suppression toward the eavesdroppers, resulting in high eavesdropper gain. ZF places nulls at the eavesdropper directions, but reduces the gain toward the serving satellite.
    The proposed PD-SAC steers the main lobe toward the serving satellite while partially nulling the eavesdropper directions to satisfy the average secrecy outage constraint.

    \begin{figure*}[t]
        \centering
        \includegraphics[width=.95\textwidth]{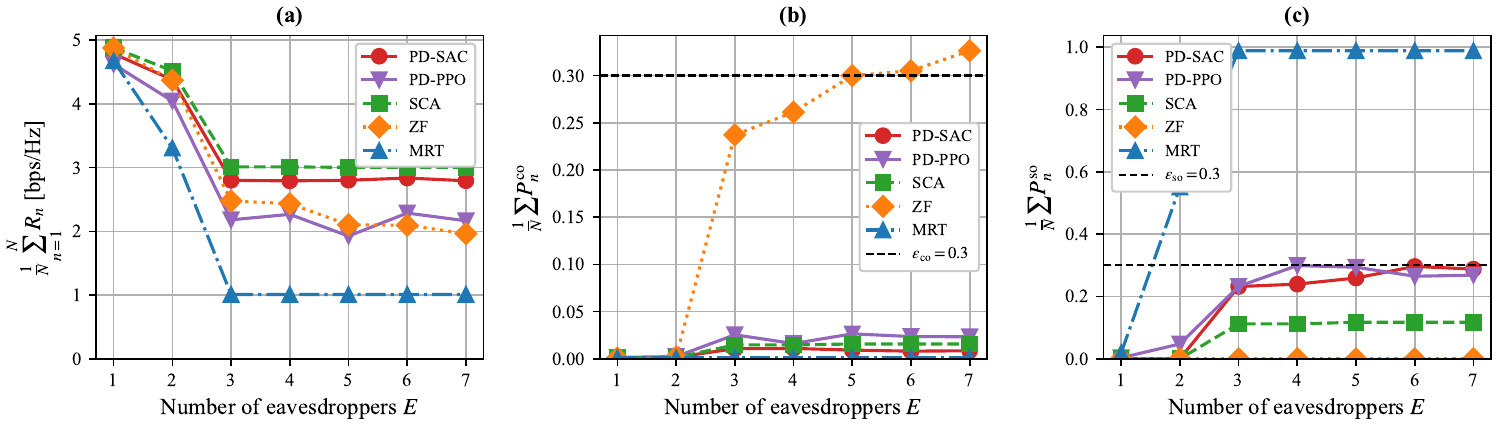}
        \caption{Performance versus the number of eavesdroppers $E \in \{1,\dots,7\}$: (a)~average secrecy rate, (b)~average connection outage probability, and (c)~average secrecy outage probability. Dashed lines indicate the constraint thresholds.}
        \label{fig:vs_E}
    \end{figure*}

    Fig.~\ref{fig:vs_E} shows the secrecy performance as the number of eavesdroppers $E$ increases. The proposed PD-SAC performs comparably to ZF for \(E=\{1,2\}\), but surpasses it with an increasing margin as \(E\) grows, and remains close to the offline SCA benchmark across the entire range of \(E\). In contrast, ZF's connection outage rises sharply as the null-space dimension shrinks with each additional eavesdropper, exceeding the $0.3$ threshold for $E \geq 6$, while MRT's secrecy outage rises steeply from $E \geq 2$ and saturates near one for $E \geq 3$, since it does not actively suppress eavesdroppers. PD-PPO achieves a lower secrecy rate than PD-SAC for \(E \geq 3\), with a gap that becomes pronounced as the number of eavesdroppers increases. Consequently, PD-SAC, PD-PPO, and SCA satisfy both outage constraints across the entire range, whereas ZF becomes infeasible at $E \geq 6$ due to connection outage, and MRT violates the secrecy outage constraint for $E \geq 2$.
    In addition, as the number of eavesdroppers increases, the secrecy rate eventually saturates because, in the non-colluding scenario, the secrecy performance is mainly determined by the most dominant eavesdropper rather than by all eavesdroppers collectively. In the considered along-track deployment, the large satellite-to-ground distances make the link geometry the dominant factor. Hence, adding more satellites beyond the strongest geometric eavesdropping positions causes only marginal additional degradation.

    \section{Conclusions}\label{sec:conclusions}
    This paper investigated secure uplink beamforming for LEO satellite networks against multiple satellite eavesdroppers. We derived exact outage probabilities under Nakagami-$m$ fading and developed tractable upper-bound cost functions to manage their intractability. By formulating the secrecy rate maximization as a CMDP, we proposed a PD-SAC algorithm to optimize the precoder. Simulations demonstrated that the proposed algorithm outperforms MRT and ZF, particularly as the number of eavesdroppers increases, while satisfying the outage constraints. It closely approaches an offline per-slot SCA benchmark while outperforming an on-policy PD-PPO counterpart. Although the SCA benchmark attains a higher secrecy rate, its iterative per-slot optimization is unsuitable for real-time deployment. In contrast, the proposed policy provides a deterministic mapping from the geometry-derived channel characteristics to the beamforming vector, enabling practical secure communications on dynamic satellite links. Future work includes addressing colluding eavesdroppers, imperfect channel state information, and massive antenna arrays.

\ifCLASSOPTIONcaptionsoff
    \newpage
\fi

\enlargethispage{2\baselineskip}
\bibliographystyle{IEEEtran}
\bibliography{
    references/IEEEabrv,
    references/3GPP,
    references/books,
    references/chSR,
    references/myPapers,
    references/refs,
    references/DRL,
    references/etc,
    references/antGain
}

@techreport{TR38.811,
  author      = {{3GPP TR 38.811 v15.4.0}},
  title       = {Study on {NR} to support non-terrestrial networks},
  institution = {3GPP},
  year        = {2020},
  month       = {Sep.}
}

@techreport{TR38.821,
  author      = {{3GPP TR 38.821 v16.0.0}},
  title       = {Solutions for {NR} to support non-terrestrial networks ({NTN})},
  institution = {3GPP},
  year        = {2019},
  month       = {Dec.}
}

@ARTICLE{23_DRL_5G,
  author={Faris B. Mismar and Brian L. Evans and Ahmed Alkhateeb},
  journal=TWC, 
  title={Deep Reinforcement Learning for {5G} Networks: Joint Beamforming, Power Control, and Interference Coordination}, 
  year={2020},
  volume={68},
  number={3},
  pages={1581-1592},
  doi={10.1109/TCOMM.2019.2961332},
  ISSN={0090-6778},
  month={Dec.},}

@ARTICLE{23_sat_rl_beam,
  author={Deng, Danhao and Wang, Chaowei and Pang, Mingliang and Wang, Weidong},
  journal=WCL, 
  title={Dynamic Resource Allocation With Deep Reinforcement Learning in Multibeam Satellite Communication}, 
  year={2023},
  volume={12},
  number={1},
  pages={75-79},
  doi={10.1109/LWC.2022.3217316}}

@ARTICLE{24_sat_rl_beam,
  author={Wu, Min and Guo, Kefeng and Li, Xingwang and Lin, Zhi and Wu, Yongpeng and Tsiftsis, Theodoros A. and Song, Houbing},
  journal=TCOM, 
  title={Deep Reinforcement Learning-Based Energy Efficiency Optimization for RIS-Aided Integrated Satellite-Aerial-Terrestrial Relay Networks}, 
  year={2024},
  volume={72},
  number={7},
  pages={4163-4178},
  doi={10.1109/TCOMM.2024.3370618}}

@inproceedings{haarnoja18_sac,
  author    = {T. Haarnoja and A. Zhou and P. Abbeel and S. Levine},
  title     = {Soft Actor-Critic: Off-Policy Maximum Entropy Deep Reinforcement Learning with a Stochastic Actor},
  booktitle = {Proc. 35th Int. Conf. Mach. Learn. (ICML)},
  volume    = {80},
  pages     = {1861--1870},
  year      = {2018},
  month     = {July},
  publisher = {PMLR}
}

@article{haarnoja18b_sac,
  author    = {T. Haarnoja and A. Zhou and K. Hartikainen and G. Tucker and S. Ha and J. Tan and V. Kumar and H. Zhu and A. Gupta and P. Abbeel and S. Levine},
  title     = {Soft Actor-Critic Algorithms and Applications},
  journal   = {arXiv preprint arXiv:1812.05905},
  year      = {2018},
  month     = {Dec.}
}

@article{schulman2017proximal,
  author    = {J. Schulman and F. Wolski and P. Dhariwal and A. Radford and O. Klimov},
  title     = {Proximal Policy Optimization Algorithms},
  journal   = {arXiv preprint arXiv:1707.06347},
  year      = {2017},
  month     = {July}
}

@inproceedings{achiam17_cpo,
  author    = {J. Achiam and D. Held and A. Tamar and P. Abbeel},
  title     = {Constrained Policy Optimization},
  booktitle = {Proc. 34th Int. Conf. Mach. Learn. (ICML)},
  volume    = {70},
  pages     = {22--31},
  year      = {2017},
  month     = {Aug.}
}

@article{paternain23_safe,
  author    = {S. Paternain and M. Calvo-Fullana and L. F. O. Chamon and A. Ribeiro},
  title     = {Safe Policies for Reinforcement Learning via Primal-Dual Methods},
  journal   = {{IEEE} Trans. Autom. Control},
  volume    = {68},
  number    = {3},
  pages     = {1321--1336},
  year      = {2023},
  month     = {Mar.},
  doi       = {10.1109/TAC.2022.3152724}
}

@inproceedings{yang21_wcsac,
  author    = {Q. Yang and T. D. Sim{\~a}o and S. H. Tindemans and M. T. J. Spaan},
  title     = {{WCSAC}: Worst-Case Soft Actor Critic for Safety-Constrained Reinforcement Learning},
  booktitle = {Proc. AAAI Conf. Artif. Intell. (AAAI)},
  volume    = {35},
  number    = {12},
  pages     = {10639--10646},
  year      = {2021},
  month     = {Feb.},
  publisher = {AAAI Press}
}

@INPROCEEDINGS{19_RCPO,
  author={Tessler, Chen and Mankowitz, Daniel J. and Mannor, Shie},
  booktitle={Proc. Int. Conf. Learn. Representations (ICLR)}, 
  title={Reward Constrained Policy Optimization}, 
  year={2019},
  month={May},
  address={New Orleans, LA, USA}
}

@INPROCEEDINGS{06_RL_constrained,
  author={Geibel, Peter},
  booktitle={Proc. Eur. Conf. Mach. Learn. (ECML)},
  title={Reinforcement Learning for {MDP}s with Constraints},
  year={2006},
  pages={646-653},
  month={Sep.},
  address={Berlin, Germany}
}

@inproceedings{fujimoto18_td3,
  author    = {S. Fujimoto and H. van Hoof and D. Meger},
  title     = {Addressing Function Approximation Error in Actor-Critic Methods},
  booktitle = {Proc. 35th Int. Conf. Mach. Learn. (ICML)},
  volume    = {80},
  pages     = {1587--1596},
  year      = {2018},
  month     = {Jul.},
  publisher = {PMLR}
}

@STRING{WCL         = "{IEEE} Wireless Commun. Lett."}

@STRING{JSAC        = "{IEEE} J. Sel. Areas Commun."}

@STRING{TCOM        = "{IEEE} Trans. Commun."}

@STRING{TWC         = "{IEEE} Trans. Wireless Commun."}

@STRING{TVT         = "{IEEE} Trans. Veh. Technol."}

@STRING{TIFS        = "{IEEE} Trans. Inf. Forensics Security"}

@STRING{CST         = "{IEEE} Commun. Surveys Tuts."}

@STRING{IA          = "{IEEE} Access"}

@STRING{ICML     = "Proc. Int. Conf. Mach. Learn. (ICML)"}

@STRING{ICLR     = "Proc. Int. Conf. Learn. Represent. (ICLR)"}

@STRING{AAAI     = "Proc. AAAI Conf. Artif. Intell. (AAAI)"}

@misc{antBessel,
      title={Max-Min Fair Energy-Efficient Beam Design for Quantized {ISAC} {LEO} Satellite Systems: A Rate-Splitting Approach}, 
      author={Ziang Liu and Longfei Yin and Wonjae Shin and Bruno Clerckx},
      year={2024},
      eprint={2402.09253},
      archivePrefix={arXiv},
      primaryClass={eess.SP},
      note={arXiv:2402.09253}
}

@book{book99_cmdp,
  author    = {E. Altman},
  title     = {Constrained {Markov} Decision Processes},
  publisher = {Chapman and Hall/CRC},
  address   = {Boca Raton, FL},
  year      = {1999}
}

@book{book13OrbitalVelocity,
  title={Fundamental planetary science: physics, chemistry and habitability},
  author={Lissauer, Jack J and De Pater, Imke},
  year={2013},
  publisher={Cambridge University Press}
}

@article{markswright78,
  author  = {B. R. Marks and G. P. Wright},
  title   = {A General Inner Approximation Algorithm for Nonconvex Mathematical Programs},
  journal = {Oper. Res.},
  volume  = {26},
  number  = {4},
  pages   = {681--683},
  year    = {1978}
}

@article{wyner75,
  author    = {A. D. Wyner},
  title     = {The Wire-Tap Channel},
  journal   = {Bell Syst. Tech. J.},
  volume    = {54},
  number    = {8},
  pages     = {1355--1387},
  year      = {1975},
  month     = {Oct.},
  doi       = {10.1002/j.1538-7305.1975.tb02040.x}
}

@article{mukherjee14,
  author    = {A. Mukherjee and S. A. A. Fakoorian and J. Huang and A. L. Swindlehurst},
  title     = {Principles of Physical Layer Security in Multiuser Wireless Networks: A Survey},
  journal   = CST,
  volume    = {16},
  number    = {3},
  pages     = {1550--1573},
  year      = {2014},
  doi       = {10.1109/COMST.2014.2303214}
}

@article{lin19_robust_sat,
  author    = {Z. Lin and M. Lin and J. Ouyang and W.-P. Zhu and A. D. Panagopoulos and M.-S. Alouini},
  title     = {Robust Secure Beamforming for Multibeam Satellite Communication Systems},
  journal   = TVT,
  volume    = {68},
  number    = {6},
  pages     = {6202--6206},
  year      = {2019},
  month     = {June},
  doi       = {10.1109/TVT.2019.2913793}
}

@article{guo20_pls_sat,
  author    = {K. Guo and K. An and B. Zhang and Y. Huang and X. Tang and G. Zheng and T. A. Tsiftsis},
  title     = {Physical Layer Security for Multiuser Satellite Communication Systems With Threshold-Based Scheduling Scheme},
  journal   = TVT,
  volume    = {69},
  number    = {5},
  pages     = {5129--5141},
  year      = {2020},
  month     = {May},
  doi       = {10.1109/TVT.2020.2979408}
}

@article{lin18_cog_sat,
  author    = {M. Lin and Z. Lin and W.-P. Zhu and J.-B. Wang},
  title     = {Joint Beamforming for Secure Communication in Cognitive Satellite Terrestrial Networks},
  journal   = JSAC,
  volume    = {36},
  number    = {5},
  pages     = {1017--1029},
  year      = {2018},
  month     = {May},
  doi       = {10.1109/JSAC.2018.2832819}
}

@ARTICLE{JSAC_Zhu,
  author={Zhu, Yongxu and Zheng, Gan and Fitch, Michael},
  journal=JSAC, 
  title={Secrecy Rate Analysis of {UAV}-Enabled mmWave Networks Using {Mat\'ern} Hardcore Point Processes}, 
  year={2018},
  volume={36},
  number={7},
  pages={1397-1409},
  doi={10.1109/JSAC.2018.2825158}}

@ARTICLE{TIFS_Lei,
  author={Lei, Jiang and Han, Zhu and Vazquez-Castro, Mar\'ia \'Angeles and Hjorungnes, Are},
  journal=TIFS, 
  title={Secure Satellite Communication Systems Design With Individual Secrecy Rate Constraints}, 
  year={2011},
  volume={6},
  number={3},
  pages={661-671},
  doi={10.1109/TIFS.2011.2148716}}

@ARTICLE{TWC_Zheng,
  author={Zheng, Gan and Arapoglou, Pantelis-Daniel and Ottersten, Bjorn},
  journal=TWC, 
  title={Physical Layer Security in Multibeam Satellite Systems}, 
  year={2012},
  volume={11},
  number={2},
  pages={852-863},
  doi={10.1109/TWC.2011.120911.111460}}

@ARTICLE{WCL_Li,
  author={Li, Bin and Fei, Zesong and Xu, Xiaoming and Chu, Zheng},
  journal=WCL, 
  title={Resource Allocations for Secure Cognitive Satellite-Terrestrial Networks}, 
  year={2018},
  volume={7},
  number={1},
  pages={78-81},
  doi={10.1109/LWC.2017.2755014}}

@article{tang19_uav_eve,
  author    = {J. Tang and G. Chen and J. P. Coon},
  title     = {Secrecy Performance Analysis of Wireless Communications in the Presence of {UAV} Jammer and Randomly Located {UAV} Eavesdroppers},
  journal   = TIFS,
  volume    = {14},
  number    = {11},
  pages     = {3026--3041},
  year      = {2019},
  month     = {Nov.},
  doi       = {10.1109/TIFS.2019.2912074}
}

@article{yuan19_uav_eve,
  author    = {X. Yuan and Z. Feng and W. Ni and Z. Wei and R. P. Liu and J. A. Zhang},
  title     = {Secrecy Rate Analysis Against Aerial Eavesdropper},
  journal   = TCOM,
  volume    = {67},
  number    = {10},
  pages     = {7027--7042},
  year      = {2019},
  month     = {Oct.},
  doi       = {10.1109/TCOMM.2019.2928558}
}

@article{bao20_uav_eve,
  author    = {T. Bao and J. Zhu and H.-C. Yang and M. O. Hasna},
  title     = {Secrecy Outage Performance of Ground-to-Air Communications With Multiple Aerial Eavesdroppers and Its Deep Learning Evaluation},
  journal   = WCL,
  volume    = {9},
  number    = {9},
  pages     = {1351--1355},
  year      = {2020},
  month     = {Sep.},
  doi       = {10.1109/LWC.2020.2993189}
}

@article{my24JSAC_EA,
  author  = {J. Seong and J. Park and D.-H. Jung and J. Park and W. Shin},
  title   = {Rate-splitting for joint unicast and multicast transmission in {LEO} satellite networks with non-uniform traffic demand},
  journal = JSAC,
  year    = {2024},
  note    = {early access, Sep. 13, 2024, doi: 10.1109/JSAC.2024.3460073}
}

@article{my24TWC,
  author   = {Jung, Dong-Hyun and Nam, Hongjae and Choi, Junil and Love, David J.},
  journal  = TWC,
  title    = {Modeling and Analysis of {GEO} Satellite Networks},
  year     = {2024},
  volume   = {23},
  number   = {11},
  pages    = {16757-16770},
  doi      = {10.1109/TWC.2024.3447229}
}

@article{my23WCL,
  author   = {Jung, Dong-Hyun and Ryu, Joon-Gyu and Choi, Junil},
  journal  = WCL,
  title    = {Satellite Clustering for Non-Terrestrial Networks: Orbital Configuration-Dependent Outage Analysis},
  year     = {2024},
  volume   = {13},
  number   = {2},
  pages    = {550-554},
  doi      = {10.1109/LWC.2023.3335918}
}

@article{my22TIFS,
  author  = {Jung, Dong-Hyun and Ryu, Joon-Gyu and Choi, Junil},
  journal = TIFS,
  title   = {When Satellites Work as Eavesdroppers},
  year    = {2022},
  volume  = {17},
  number  = {},
  pages   = {2784-2799},
  doi     = {10.1109/TIFS.2022.3188150},
  issn    = {1556-6021},
  month   = {}
}

@inproceedings{my17VTC,
  author    = {Jung, Donghyun and Lee, Jae Hong},
  booktitle = {2017 IEEE 86th Vehicular Technology Conference (VTC-Fall)},
  title     = {Secrecy Performance of Full-Duplex Relay System with Randomly Located Eavesdroppers},
  year      = {2017},
  volume    = {},
  number    = {},
  pages     = {1-5},
  doi       = {10.1109/VTCFall.2017.8288219}
}

@article{Alzer,
  title={On some inequalities for the incomplete gamma function},
  author={Alzer, Horst},
  journal={Mathematics of Computation},
  volume={66},
  number={219},
  pages={1239--1252},
  year={1997},
  publisher={American Mathematical Society}
}

\end{document}